\newtheorem{lemma}{Lemma}
\newtheorem{theorem}{Theorem}
\newtheorem{algorithm}{Algorithm}
\def\LSB{\left[}        
\def\RSB{\right]}       
\def\LB{\left(}         
\def\RB{\right)}        
\newfont{\bbb}{msbm10 scaled 500}
\newfont{\bb}{msbm10 scaled 1100}
\newcommand{\CC}{\mbox{\bb C}}
\newcommand{\RR}{\mbox{\bb R}}
\newcommand{\bv}{{\bf b}}
\newcommand{\hv}{{\bf h}}
\newcommand{\wv}{{\bf w}}
\newcommand{\vv}{{\bf v}}
\newcommand{\xv}{{\bf x}}
\newcommand{\Am}{{\bf A}}
\newcommand{\Bm}{{\bf B}}
\newcommand{\Em}{{\bf E}}
\newcommand{\Hm}{{\bf H}}
\newcommand{\Id}{{\bf I}}
\newcommand{\Pm}{{\bf P}}
\newcommand{\Qm}{{\bf Q}}
\newcommand{\Wm}{{\bf W}}
\newcommand{\Xm}{{\bf X}}
\newcommand{\trace}{{\hbox{tr}}}
\newcommand{\defines}{{\,\,\stackrel{\scriptscriptstyle \bigtriangleup}{=}\,\,}}
\newtheorem{proposition}[theorem]{Proposition}
\newtheorem{corollary}[theorem]{Corollary}
\newcommand{\beqa}{\begin{eqnarray}}
\newcommand{\eeqa}{\end{eqnarray}}
\newcommand{\dsp}{\displaystyle}
\def\argmax{\operatornamewithlimits{arg\,max}}
\begin{document}
%

\title{Transmit Power Minimization in Small Cell Networks Under Time Average QoS Constraints}
\author{Subhash~Lakshminarayana,~\IEEEmembership{Member, IEEE}, Mohamad Assaad~\IEEEmembership{Member, IEEE} and M\'erouane Debbah~\IEEEmembership{Fellow, IEEE} 
\thanks{Manuscript received 21 July, 2014; revised 15 Dec, 2014; accepted 21 Feb, 2015. 
This paper was presented in part at the IEEE International Workshop on Signal Processing Advances in Wireless Communications, SPAWC 2013. 
S. Lakshminarayana is with the Advanced Digital Sciences Center, Illinois at Singapore, Singapore 138632 (e-mail: subhash.l@adsc.edu.sg). M.Assaad is with the ``Laboratoire des Signaux et Systemes (L2S, UMR CNRS 8506), CENTRALESUPELEC", France. M. Debbah is with CENTRALESUPELEC. e-mail: subhashl@princeton.edu,  mohamad.assaad@centralesupelec.fr and  merouane.debbah@centralesupelec.fr.

M\'erouane Debbah was supported by the European Research Council under Grant 305123 MORE.
The research of M. Assaad was partially supported by the Celtic Project ``SHARING."
}
}

\maketitle
\begin{abstract}
We consider a small cell network (SCN) consisting of 
$N$ cells, with the small cell base stations (SCBSs) equipped with $N_t \geq 1$ antennas each, serving $K$ single antenna user terminals (UTs) per cell.
Under this set up, we address the following question: given certain time average quality of service (QoS)
targets for the UTs, what is the minimum transmit power expenditure with which they 
can be met? Our motivation to consider time average QoS constraint comes from the fact that modern wireless applications such as file sharing, multi-media etc. allow some flexibility in terms of their delay tolerance.
Time average QoS constraints can lead to greater transmit power savings as compared to instantaneous QoS constraints since it provides the flexibility to dynamically allocate resources over the fading channel states.
We formulate the problem as a stochastic optimization problem whose solution is the design of the downlink beamforming vectors during each time slot.
We solve this problem using the approach of Lyapunov optimization and characterize the performance of the proposed algorithm.
With this algorithm as the reference, we present two main
contributions that incorporate practical design 
considerations in SCNs.
First, we analyze the impact 
of delays incurred in information exchange between the SCBSs.
Second, we impose channel state information (CSI) feedback constraints, and formulate a joint CSI feedback and beamforming strategy.
In both cases, we provide performance bounds of the algorithm in terms of satisfying the QoS constraints and the time average power expenditure.
Our simulation results show that solving the problem
with time average QoS constraints provide greater savings in the transmit power as compared to the 
instantaneous QoS constraints.
\end{abstract}

\begin{keywords}
Small cell networks, Time average QoS constraints, Virtual queue, Lyapunov optimization, CSI feedback.
\end{keywords}

\section{Introduction}
The global carbon emission related to information
and communication technology (ICT) has been increasing at an alarming rate (an increase of 10\% every year) \cite{GreenTouch2011,GeSI2020}. 
Therefore, energy efficiency is becoming an important concern in the design of future wireless networks both from environmental and economical point of view.
Minimizing the transmit power leads to
a significant reduction in the overall power consumption
at the base stations, hence leading to greater energy efficient design \cite{EnergyEff2010}.

At the same time, there is an ever growing demand for higher data rate services and quality of service (QoS) guarantees among mobile users. Network densification has been identified as a promising solution to satisfy
the growing data rate demands, resulting in massive deployment of 
small cell (SCs) leading to greater spatial reuse \cite{AndrewsClaussen2012}, \cite{Hoydis2010}. Nevertheless, SCs alone cannot provide seamless coverage over large areas, and hence they must co-exist with the macro-base station, resulting in heterogeneous networks (HetNets).
The deployment of SCs must be planned carefully (so that they can co-exist with macro-cellular networks), and hence low complexity, decentralized interference management schemes are very important in HetNet design \cite{RanganFemtocells2010}, \cite{XiaAndrewsFemto2010}.

Motivated by the aforementioned developments, in this work
we consider the problem of minimizing the transmit power in 
SCNs, in which the small cell base stations (SCBSs) are equipped with multiple antennas. In HetNets, while the macro-base stations  mainly provide coverage and signalling information, the  data intensive applications such as file sharing, multi-media etc. are transmitted by the SCBSs. 
Additionally, such applications also allow some latitude in terms
of delay tolerance \cite{HPVideoStreaming2002}. Motivated by this fact, we consider the problem of minimizing the transmit power expenditure at the SCBSs
subject to time average QoS constraints of the user terminals (UTs), where the QoS constraints have to be met over a long period of time (and not instantaneously). Time average QoS constraints provide the flexibility to dynamically allocate resources over the fading channel states
as compared to instantaneous QoS constraints.  
In terms of energy savings, time average QoS constraint can lead to better performance, due to the fact that the 
transmissions can be delayed until favourable channel conditions are seen, thus minimizing the
energy expenditure.
The concept has also been exploited in the context of \emph{energy-delay trade offs} \cite{UysalPrabhakarGamal2002,BerryGall2002}.

\subsection*{Related Work}
The issue of minimizing the downlink power and beamforming design in multi-antenna systems 
subject to UT signal to interference noise ratio (SINR) constraints was first addressed in 
\cite{RashidTassiulas1998}. The authors proposed
an iterative algorithm based on uplink-downlink duality that
converges to a feasible solution (if the solution exists). 
The aspect of feasibility of the downlink SINR targets,
and the design of optimal beamforming vectors to minimize
the transmit power was studied for a multi-user downlink scenario
in \cite{SchubertBoche2004} and \cite{SchubertBoche2005}.
The downlink power minimization problem was solved using a second order conic programming
(SOCP) based approach in \cite{WeiselShamai2006}. 
This result was interpreted in a Lagrangian duality based   
framework for the single cell
and multi-cell scenario in \cite{YuL07} and \cite{Dahrouj2010}
respectively. 
References \cite{SayedLeakage2007}, \cite{DingLeakageTSP2013} and \cite{DingLeakageTWC2014}
proposed precoder design that maximize the so-called signal-to-leakage-and-noise ratio (SLNR) for all UTs simultaneously.
However, all the aforementioned works 
consider the problem of beamforming design during a given time slot, for a fixed channel realization and instantaneous 
QoS constraints. 

The problem of handling delay optimal precoder design 
in multi-user MIMO systems has been addressed in \cite{LauChenTWC2009},\cite{HuangLauLimFB2012} using the approach of Markov decision problem.
However, these works are restricted to a single BS scenario, and do not address the issue of decentralized design, delayed information exchange, and CSI
feedback constraints that are very essential in SCN design.

\subsection*{Summary and Contributions}
In this work, we consider a stochastic version of the beamforming design problem in SCNs, and consider minimizing the transmit power subject to the time average QoS constraints.
We formulate our problem as a stochastic optimization problem,
and propose a solution based on the technique of Lyapunov optimization \cite{neelybook2006}, \cite{neelybook2010}.
The Lyapunov optimization technique provides simple online solutions based only on the current knowledge of the
system state (as opposed to traditional approaches such as dynamic programming which suffer from very high complexity and require a-priori 
knowledge of the statistics of all the random processes in the system). To the best of our knowledge, the application of 
Lyapunov optimization technique in the context of beamforming in MIMO systems under time average 
QoS constraints is novel.

We model the time average QoS constraints as \emph{virtual queues}, and transform the problem into a transmit power minimization problem under the queue stability constraint. We then use the technique of Lyapunov optimization 
to formulate the beamforming design during each time slot. 
We provide the performance bounds of this algorithm
in terms of satisfying the QoS constraints and the time average transmit power.
In our algorithm, the SCBSs can formulate the beamforming vectors using only the local channel state information (CSI).
The SCBSs would only have to exchange virtual queue-length information among themselves. 

We then present two main contributions in the context of Lyapunov optimization that incorporate practical design considerations in SCNs. \\
{\bf Delays incurred in information exchange between the SCBSs}  \\
We introduce delays in information exchange between the SCBSs (for e.g., delays introduced in the backhaul links), and 
characterize the performance of our algorithm under this scenario. Specifically, we show that the delays in information exchange among the SCBSs result in only a constant 
gap with respect to the performance of the case with no delays. Moreover,  under some conditions the gap vanishes, and the impact of delay on our algorithm becomes negligible.  \\
{\bf Limited CSI at the SCBSs} \\
Secondly, in order to limit the CSI feedback load,
we consider the case when the SCBS can obtain the CSI from
a limited number of UTs during each time slot.
In this case, we solve the problem of joint CSI feedback and beamforming design by using the Lyapunov optimization framework.

In practice, (e.g. in long term evolution (LTE) and LTE-advanced networks), obtaining the CSI feedback from all the UTs in the network becomes impractical as the number of UTs increase, since this leads to a huge feedback overhead (this is even true in a network consisting of single antenna links where the feedback is a simple scalar). 
Therefore, the SCBS have to decide which subset of UTs must feed back their CSI during each time slot. The problem of joint CSI feedback and transmission is known to be challenging. The main complexity lies in the fact that the transmitter must decide which UTs have to feedback their CSI without a-priori knowledge of their current channel
states \cite{GopCarShakkottai2007}. 
Furthermore, in MIMO systems, CSI knowledge is crucial in
formulating the beamforming vectors. 
The issue of joint CSI feedback and beamforming design problem (in terms of selection which UTs have to feedback their CSI) is very challenging.
Most of existing works in this context either assume a predefined beamforming strategy (e.g. ZF), and/ or let all the UTs feedback a quantized version of their CSI \cite{LoveHeathLimitedFB2008}, \cite{KobJindalCaire2011}, \cite{HuangLauLimFB2012}, \cite{DestounisMoDebbah2014}.
However, in practice, even when the CSI feedback is quantized, with a large number of UTs, only a subset of UTs can feed back their CSI (and choosing the optimal subset is complex).

For the case of joint CSI feedback and beamforming design, we provide the following results:
\begin{itemize}
\item We first present 
the feedback decision rule, and the corresponding beamforming design  strategy obtained by the analysis of Lyapunov optimization.
\item Second, we present a low complexity 
algorithm named $AlgF$ in order to optimally
solve the CSI feedback decision problem obtained by the analysis of Lyapunov optimization.
\item We then provide a performance analysis of the proposed algorithm, and compare it to the performance of the 
optimal solution. We also show that under certain settings the performance can be made very close to the optimal value. 
\end{itemize}

The rest of the paper is organized as follows.
We specify the system model in Section \ref{sec:sysmodel}.
We first provide the solution based on Lyapunov optimization
with perfect CSI at the SCBSs in Section \ref{sec:PerfectCSI}.
The impact of delayed information exchange between the SCBSs
is addressed in Section \ref{sec:DelInfoExchg}.
The case with limited CSI at the SCBS and the problem formulation in case of joint CSI feedback and beamforming design is considered in
Section \ref{sec:FB_BF_Opt}. The numerical results are provided in Section \ref{sec:NumResults}. Finally, 
the paper is concluded in Section \ref{sec:Conclusion}.
The technical proofs of the results in this paper are provided in Appendices A,B, and C.

\emph{Notations:} Throughout this work, we use boldface lowercase and
uppercase letters to designate column vectors and matrices,
respectively. For a vector $\xv$, $\xv^T$ and $\xv^H$ 
denote the transpose, the complex conjugate respectively (and similarly for a matrix).
The notation $\trace(\Xm),$ and $\lambda^{\max}(\Xm)$ 
denote the trace and the maximum eigenvalue of 
the matrix $\Xm$ respectively. We denote the identity matrix
by the notation $\Id.$

\section{System Model}
\label{sec:sysmodel}
\begin{figure}[htp]
\begin{center}
\begin{scriptsize}
 \includegraphics[width= 2.5 in]{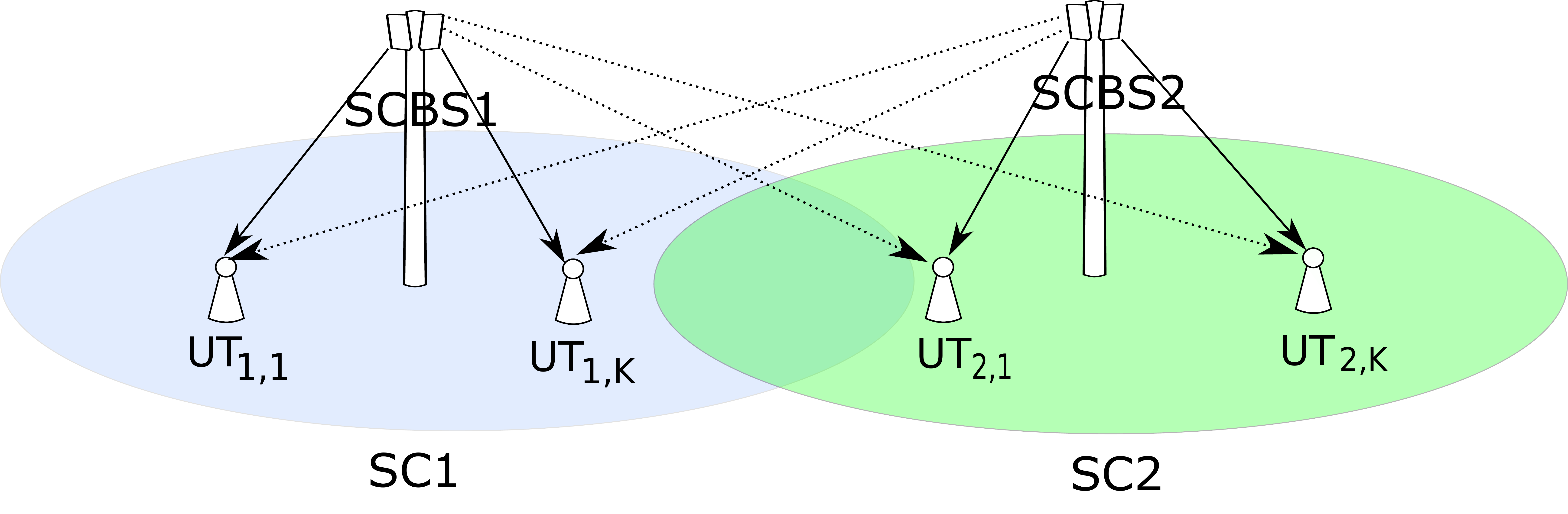}   
 \caption{Example of a SCN consisting of $2$ small cells.}
 \label{fig:SCN}
  \end{scriptsize}
  \end{center}
 \end{figure} 
The system model is illustrated in Figure \ref{fig:SCN}. We consider SCN consisting of $N$ cells and $K$ user terminals (UTs) per cell \footnote{In Subsection II.B, we provide extensions of the system model to the case of a multi-tier network.}. The SCBSs are equipped with $N_t$ antennas each, and the UTs have a single antenna. The notation UT$_{i,j}$ denotes the $j$-th UT present in the $i$-th SC. The SCBS of each cell serves only the UTs present in its cell. 
We consider a discrete-time block-fading channel model where the channel
remains constant for a given coherence interval and then changes
independently from one block to the other. We index the time slots 
by $t.$ We denote the channel vector from the SCBS$_i$  to the UT$_{j,k}$ 
during the time slot $t$ by $\hv_{i,j,k}[t] \in \CC^{N_t}.$
The elements of the channel vector are independent and identically distributed (i.i.d.).
Furthermore, we consider that the channel vector can be represented
by two components, i.e, $\hv_{i,j,k}[t] = \sigma_{i,j,k} \hv^{\prime}_{i,j,k}[t], $ where $\sigma_{i,j,k}$ is the pathloss component
and $\hv^{\prime}_{i,j,k}[t]$ is the fast fading component, and
$\mathbb{E}[\hv^{\prime}_{i,j,k}[t] (\hv^{\prime}_{i,j,k}[t])^H] = \Id$.
We define the channel matrix $\Hm[t]$ given
$\Hm_{i,j}[t]  = \LSB \hv_{i,j,1}[t],\dots,\hv_{i,j,K}[t] \RSB;$
$\Hm_{i}[t]  = \LSB \Hm_{i,1}[t],\dots,\Hm_{i,N}[t]\RSB$ and $\Hm[t]  = \LSB \Hm_{i}[t],\dots,\Hm_{N}[t]\RSB.$
The channel process $\{ \Hm[t], t = 0,1,2,\dots \}$ is assumed to 
be an i.i.d. (across time slots) discrete time stationary ergodic random process.
We also consider a practical assumption that the channel gains are bounded at all times, i.e., $|\hv_{i,n,k}[t]|^2 \leq H_{\max} \ \forall i,n,k.$
Throughout this work, we use the following definitions:
\begin{itemize}
\item \emph{Local CSI at BS$_i$}  : The CSI from BS$_i$ to 
all the UTs in the system, i.e., 
$\hv_{i,n,k}, \ n = 1,\dots,N, \ k = 1,\dots,K.$ 
\item \emph{Non-local CSI at BS$_i$} :  CSI from BS$_j$  ($j \neq i$) to UT$_{n,k}, \ n = 1,\dots,N, \ k = 1,\dots,K.$  
\item \emph{Global CSI} : CSI from all the BSs to all the UTs in the system. 
\end{itemize}
The local CSI corresponds to the information that can be obtained locally (either through feedback or uplink pilots in schemes such as the TDD), where as the non-local CSI is the information
that must be exchanged between the BSs.

Let us denote the beamforming vector corresponding to UT$_{i,j}$ during  slot $t$ by $\wv_{i,j}[t] \in \CC^{N_t}.$  
The signal received by UT$_{i,j}$ during time $t$ is given by
\begin{align}
y_{i,j}[t]  = {\hv}^H_{i,i,j}[t] \wv_{i,j}[t]x_{i,j}[t] + \sum_{\substack{(n,k) \\ \neq (i,j)}} \hv^H_{n,i,j}[t] & \wv_{n,k}[t] x_{n,k}[t] \nonumber \\ & +z_{i,j}[t],
\end{align}
where $x_{i,j}[t] \in \CC$ represents the information signal for the UT$_{i,j}$
during the time slot $t$ and $z_{i,j}[t]$
is the noise with variance $N_0.$
The downlink SINR for UT$_{i,j}$ is given by
\begin{align}
\text{SINR}_{i,j}[t] = \frac{|\wv_{i,j}[t] h_{i,i,j}[t]|^2}{\sum_{(n,k) \neq (i,j)} |\wv_{n,k}[t] h_{n,i,j}[t]|^2+N_0},
\end{align}
where the numerator represents the useful signal and the denominator terms represent the interference and the noise terms respectively. 
In the rest of this paper, we set $N_0 = 1,$ and normalize the useful signal and the interference signal 
power by the variance of the noise. Although not explicitly mentioned, henceforth all the signal powers 
are assumed to be the normalized values, and $N_0$ is taken to be $1.$
The transmit power of SCBS$_i$ denoted as $P_i[t]$ depends on the beamforming vector 
during the time slot $t$ which can be given as
$P_i[t] = \sum^K_{j = 1} \wv^H_{i,j}[t]\wv_{i,j}[t], \ i = 1,\dots,N.$
The optimization problem to minimize the average energy expenditure 
subject to time average QoS constraint can be formulated as 
\beqa
 &\dsp \min & \lim_{T \to \infty} \frac{1}{T}\sum^{T-1}_{t = 0}\mathbb{E} \Big{[} \sum^N_{i = 1} P_i[t] \Big{]} \label{eqn:OPT_basicEE}\\
& s.t. &  \lim_{T \to \infty} \frac{1}{T} \sum^{T-1}_{t = 0} \mathbb{E} \LSB {\gamma}_{i,j}[t] \RSB  \geq \lambda_{i,j}, \ \ \forall i,j \label{eqn:const1} \\
&& \sum^K_{j = 1} \wv^H_{i,j}[t]\wv_{i,j}[t] \leq P_{\text{peak}} \qquad \forall i,t ,\label{eqn:const11} 
 \eeqa
where $P_{\text{peak}}$ is the peak power at which the SCBSs can transmit.
${\gamma}_{i,j}$ is the QoS metric (to be specified) and $\lambda_{i,j}$ is the QoS target.
We define QoS metric as follows:
\begin{align}
\gamma_{i,j}[t]  = |\wv^H_{i,j}[t] & \hv_{i,i,j}[t]|^2 \nonumber \\ & - \nu_{i,j} \big{(}\sum_{\substack{(n,k) \\ \neq (i,j)}}|\wv^H_{n,k}[t]\hv_{n,i,j}[t]|^2+N_0 \big{)} \label{eqn:QoS2constraint},
\end{align}
where $\nu_{i,j} \geq 0$ is a scaling factor.
This QoS metric ensures that the time average useful signal power is greater than the time average
interference signal power by a threshold $\lambda_{i,j}$.

The optimization problem described by \eqref{eqn:OPT_basicEE}, \eqref{eqn:const1}, \eqref{eqn:const11} is a stochastic optimization problem.
The control action to be taken during each time slot is the formulation of the downlink beamforming
vectors ($\wv_{i,j}[t] \ \forall i,j$) during every time slot $t.$ 
Let $P_{\text{inf}}$ be the minimum time average
power incurred over all possible sequence of control actions 
that satisfy the constraints \eqref{eqn:const1}-\eqref{eqn:const11}.
The problem in \eqref{eqn:OPT_basicEE} - \eqref{eqn:const11} can be solved
optimally using techniques such as dynamic programming. But these methods
are computationally complex and suffer from the \emph{curse of dimensionality}.

We propose to solve this by the method of 
Lyapunov optimization. 
Although this method is sub optimal, 
the control actions using this method can be computed with the knowledge 
of only the current state of the 
system and does not require a-priori 
knowledge of the statistics of the random processes associated with the system.
Moreover, it has low computational complexity as compared to dynamic programming based
techniques.
The application
of this method transforms the stochastic optimization problem into a series of successive instantaneous static optimization problems. Convexity of these instantaneous
optimization problem is desirable to provide an efficient 
solution. However, the instantaneous optimization 
problems are not necessarily convex and depend on the
form of time average QoS metric (as we shall see later in Section \ref{sec:PerfectCSI}).

\subsection{Comment on the QoS metric}
The QoS metric chosen in this work represents the difference between time 
average useful signal power and the time average scaled interference signal power,
which is constrained to be above a threshold value.
One can view the QoS metric in similar spirit with metrics such as interference temperature control \cite{Gastpar2007,ZhangCui2009, Zhang2009,NekoueiDey2012} (in which the interference, peak interference power (PIP) or average interference power (AIP) is constrained to be below a certain threshold). Our QoS metric is more general in the sense that we constrain the interference signal to be below the useful signal power upto a
threshold level.

A QoS metric of more practical importance would be the time average SINR,
and the associated constraint given by
\begin{align}
\lim_{T \to \infty} \frac{1}{T} \sum^{T-1}_{t = 0} \mathbb{E} \LSB \text{SINR}_{i,j}[t] \RSB \geq \lambda_{i,j} \label{eqn:SINRConstraint}.
\end{align}
However, the QoS metric chosen in \eqref{eqn:QoS2constraint} (the difference form) has certain functional advantages
in our algorithm design, which will be specified in the rest of 
this subsection. Before enlisting them, we point out that with the help of numerical results, we will illustrate (in Section \ref{sec:NumResults}) that the time average SINR constraint of \eqref{eqn:SINRConstraint} is indeed satisfied by the algorithm developed in this work. Therefore, it can be applied directly to problem with time average SINR constraint.

The main drawback of using the time average SINR
constraint (in our algorithm design) is that the application of Lyapunov optimization technique with this metric leads to solving successive instantaneous  optimization problems that are non-convex (as we shall see in the algorithm formulation in Section \ref{sec:PerfectCSI}). 
Consequently, we can only find a suboptimal solution 
to these non-convex problems. Furthermore, it is hard to characterize the gap between 
the solution provided by the Lyapunov optimization technique, and 
$P_{\text{inf}}.$
It is also hard to find a decentralized implementation 
of this solution, and incorporate CSI feedback constraints.
In order to overcome all the aforementioned issues,
we introduce the modified QoS metric in \eqref{eqn:QoS2constraint}.
The advantages of using this metric are the following:
\begin{itemize}
\item It enables the development of low complexity beamforming solution to the stochastic optimization problem. 
It also enables us to analytically characterize the gap between the time average power expenditure of this algorithm and $P_{\text{inf}}$ (recall that the 
solution obtained by the Lyapunov optimization technique is sub-optimal).
\item Furthermore, the beamforming design algorithm is decentralized in which only the knowledge of local CSI is required, and 
the SCBSs only have to exchange scalar variables.
\item It enables the
development of a low complexity joint CSI feedback and beamforming design. Furthermore, this algorithm is also decentralized. 
\end{itemize}

\subsection{Extension to the case of Multi-tier Network}
The system model considered in this work can be extended to different Hetnet scenarios in a straightforward manner.
To illustrate this point, note that instead of viewing the system as consisting of $N$ small cells, 
one can simply consider the system as a network consisting of $N$ tiers. 
The QoS constraints can now be stated as follows:
\begin{align}
\gamma_{i,j}[t] & = \nu^U_{i} |\wv^H_{i,j}[t]  \hv_{i,i,j}[t]|^2 \nonumber \\ 
&- \sum^N_{n = 1}\nu^I_{n,i,j} \LB \sum^K_{k = 1}  |\wv^H_{n,k}[t]\hv_{n,i,j}[t]|^2 \RB - N_0. 
\label{eqn:QoSConstraintNew}
\end{align}
In \eqref{eqn:QoSConstraintNew}, $|\wv^H_{i,j}[t]  \hv_{i,i,j}[t]|^2$ is the useful signal from the BS of the $i^{\text{th}}$ tier to UT$_j$ in its own tier,
and $\sum^K_{k = 1}  |\wv^H_{n,k}[t]\hv_{n,i,j}[t]|^2$ is the interference arising from the BS of the $n^{\text{th}}$ tier $(n \neq i)$  at the UT$_j$ 
in the $i^{\text{th}}$ tier. 
$\nu^U_{i} $  and $\nu^I_{n,i,j}$ can be interpreted as constants that scale the useful signal, and the sum of the interference caused by the BS of the
$n^{\text{th}}$ tier on to a UT$_j$ in the $i^{\text{th}}$ tier. These constants can be set to suitable values in order to adapt 
the model for different Hetnet scenarios.
The time average QoS constraint can now be set as 
\begin{align}
\lim_{T \to \infty} \frac{1}{T} \sum^{T-1}_{t = 0} \mathbb{E} \LSB {\gamma}_{i,j}[t] \RSB  \geq \lambda_{i,j}.
\end{align}
{\bf Example:} Consider a $2$ tier network consisting of a macro-cell and a small-cell. 
Let us assume that the macro-cell is indexed by $i = 1,$ and the small cell by $i = 2.$
In this case, 
suppose the macro-cell users demand a certain average interference temperature constraint from the 
small cells. This can be accomplished by setting $\nu^U_{1} = 0$ and $\nu^I_{2,1} = -1.$
The QoS constraint becomes 
\begin{align}
\lim_{T \to \infty} \frac{1}{T} \sum^{T-1}_{t = 0} \mathbb{E} \LSB  \sum^K_{k = 1}  |\wv^H_{2,k}[t]\hv_{2,1,j}[t]|^2 \RSB \leq \lambda_{1}.
\end{align}
Therefore, we can extend the  system model of this paper to different 
HetNet scenarios.

We now proceed to the algorithm design. 
In the rest of the paper, we use the short hand notation 
$\bar{X}$ to denote the time average value of the random process 
$X[t],$ i.e., $\bar{X} = \lim_{T \to \infty} \frac{1}{T} \sum^{T-1}_{t =0}\mathbb{E}[X[t]].$ 

\section{Solution by Lyapunov Optimization Technique}
\label{sec:PerfectCSI}
In order to model the time average QoS constraint, we use the concept of 
 \emph{virtual queue} \cite{neelybook2010}. The virtual queue associated with
 the time average constraint $\bar{\gamma}_{i,j} \geq \lambda_{i,j}$ evolves in the following manner,
 \begin{align}
Q_{i,j} [t+1] = &  \max \LB Q_{i,j}[t] - \mu_{i,j}[t],0 \RB + A_{i,j}[t] \label{eqn:QLevolution_notations},
\end{align}
where $A_{i,j}[t]$ denotes the arrival process 
and $\mu_{i,j}[t]$ denotes the departure process.
The arrival and departure process are given by 
\begin{align}
A_{i,j}[t] & =  \nu_{i,j}\sum_{\substack{(n,k) \\ \neq (i,j)}}  |\wv^H_{n,k}[t]\hv_{n,i,j}[t]|^2+ \nu_{i,j} N_0 + \lambda_{i,j}\label{eqn:Adef} \\
\mu_{i,j}[t] & = |\wv^H_{i,j}[t]\hv_{i,i,j}[t]|^2. \label{eqn:mudef} 
\end{align}
Note that 
the arrival and the departure process can be upper bounded as
follows:
\begin{align}
A_{i,j}[t]  & \leq  NK \nu_{\max} P_{\text{peak}} H_{\max}+ \nu_{\max}N_0+\lambda_{i,j} \defines A_{\max} \label{eqn:UpBoundQoS21}, \\
\mu_{i,j}[t]  & \leq  P_{\text{peak}}H_{\max} \defines \mu_{\max} \label{eqn:UpBoundQoS22}.
\end{align}
The basic idea is to ensure stability of the virtual queue, stated mathematically as 
$\sum_{i,j} \bar{Q}_{i,j} < \infty.$
Ensuring the stability of the virtual queue implies that 
the time average of the arrival process is less than or equal to the service process, i.e.
$\bar{A}_{i,j} - \bar{\mu}_{i,j} \leq 0 \ \forall i,j.$
In other words, the constraint \eqref{eqn:const1} is satisfied. Thus, we reformulate the original problem into minimizing the time average power expenditure while stabilizing the virtual queue as follows:
\beqa
 &\dsp \min & \sum^N_{i = 1} \bar{P}_i \\
& s.t. &  \sum_{i,j} \bar{Q}_{i,j} < \infty, \ \
 \sum^K_{j = 1} \wv^H_{i,j}[t]\wv_{i,j}[t] \leq P_{\text{peak}}.  \nonumber
 \eeqa
With this reformulation, we solve the optimization problem 
in \eqref{eqn:OPT_basicEE} using the technique of
Lyapunov optimization \cite{neelybook2006}, which allows
us to consider the joint problem of stabilizing the queue and performance optimization. 
To this end, we define the quadratic Lyapunov function $\Psi : \RR^N \to \RR$ as:
$\Psi(\Qm[t]) = \frac{1}{2}\sum_{i,j} (Q_{i,j}[t])^2.$
The Lyapunov function is a scalar measure of the aggregate queue-lengths in the system.
We define the one-step conditional Lyapunov drift as 
\begin{align}
\Delta(\Qm[t]) = \mathbb{E} \LSB \Psi(\Qm[t+1]))- \Psi(\Qm[t]) \Big{|} \Qm[t]\RSB,
\end{align}
where the expectation is with respect to the random channel states
and the (possibly random) control actions made in reaction to these channel states. 

We now examine the Lyapunov drift corresponding to the evolution of the virtual queue $Q_{i,j}$. 
\begin{proposition}
\label{prop:QLBoundm}
For the virtual queue which evolves according to \eqref{eqn:QLevolution_notations}, the Lyapunov drift follows the condition
\begin{align}
& \Delta  (\Qm[t]) \leq C_1    - \sum_{i,j} \mathbb{E} \Big{[} Q_{i,j}[t] (A_{i,j}[t]-\mu_{i,j}[t]) \Big{|} \Qm[t] \Big{]} \ \forall t, \label{eqn:papa501m}
\end{align}
where 
\begin{align}
C_1 = NK \big{(}(A_{\max})^2 + (\mu_{\max})^2 \big{)} < \infty ,
\label{eqn:C1defm} 
\end{align}
and 
$A_{\max}$ and $\mu_{\max}$ represent the upper bound on
the arrival and departure process specified in \eqref{eqn:UpBoundQoS21} and \eqref{eqn:UpBoundQoS22} respectively.
\end{proposition}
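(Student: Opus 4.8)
The plan is to prove the bound \eqref{eqn:papa501m} one virtual queue at a time and then add the $NK$ contributions, exploiting that the Lyapunov function $\Psi(\Qm[t]) = \tfrac{1}{2}\sum_{i,j}(Q_{i,j}[t])^2$ decouples across the pairs $(i,j)$. First I would start from the recursion \eqref{eqn:QLevolution_notations}, write the truncated departure term as $g_{i,j}[t] = \max\big(Q_{i,j}[t]-\mu_{i,j}[t],0\big)$ so that $Q_{i,j}[t+1] = g_{i,j}[t] + A_{i,j}[t]$, and square this identity to get $Q_{i,j}[t+1]^2 = g_{i,j}[t]^2 + 2\,g_{i,j}[t]\,A_{i,j}[t] + A_{i,j}[t]^2$. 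The whole argument then reduces to eliminating the truncation operator from $g_{i,j}[t]$ and collecting what survives into a constant plus a single queue-weighted term.

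The key step is to bound $g_{i,j}[t]$ using two elementary properties of $\max(\cdot,0)$ that hold because $Q_{i,j}[t]\ge 0$ and $\mu_{i,j}[t],A_{i,j}[t]\ge 0$: namely $g_{i,j}[t]^2 \le \big(Q_{i,j}[t]-\mu_{i,j}[t]\big)^2$ and $g_{i,j}[t]\le Q_{i,j}[t]$. Substituting the first into the quadratic term and the second into the cross term $2\,g_{i,j}[t]\,A_{i,j}[t]$ removes the truncation and, after collecting terms, gives the pathwise per-queue inequality
\[
Q_{i,j}[t+1]^2 - Q_{i,j}[t]^2 \le A_{i,j}[t]^2 + \mu_{i,j}[t]^2 - 2\,Q_{i,j}[t]\big(\mu_{i,j}[t]-A_{i,j}[t]\big).
\]
Its only queue-dependent part is the single cross term coupling $Q_{i,j}[t]$ with the instantaneous departure--arrival imbalance. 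I would then take $\tfrac{1}{2}\sum_{i,j}$, apply the conditional expectation $\mathbb{E}[\cdot\,|\,\Qm[t]]$, and bound the square terms uniformly using $A_{i,j}[t]\le A_{\max}$ and $\mu_{i,j}[t]\le \mu_{\max}$ from \eqref{eqn:UpBoundQoS21}--\eqref{eqn:UpBoundQoS22}. Summed over the $NK$ pairs these squares collapse into the finite constant $C_1$ of \eqref{eqn:C1defm}, while the linear-in-$Q$ terms assemble into the queue-weighted imbalance sum of \eqref{eqn:papa501m}, which completes the proof.

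The main obstacle is the careful treatment of the truncation operator. The inequality $g_{i,j}[t]^2 \le (Q_{i,j}[t]-\mu_{i,j}[t])^2$ has to be justified by a case split on the sign of $Q_{i,j}[t]-\mu_{i,j}[t]$ (equality when it is nonnegative, and the left side vanishing when it is negative), and the bound $g_{i,j}[t]\le Q_{i,j}[t]$ inserted into the cross term must be applied in the direction that keeps the inequality valid; this is exactly what pins down the sign of the surviving cross term in \eqref{eqn:papa501m}, so it is the step that deserves the most care. The constant is benign: bounding each squared arrival and departure term by its supremum over the $NK$ pairs immediately yields \eqref{eqn:C1defm} (a factor $\tfrac{1}{2}$ could even be retained, so $C_1$ is a convenient loose bound). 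No probabilistic subtlety arises beyond conditioning on $\Qm[t]$, since the per-queue inequality is pathwise and the expectation is taken only at the end.
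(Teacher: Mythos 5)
Your proposal is correct and takes essentially the same route as the paper's own proof in Appendix A, Part 1: square the recursion \eqref{eqn:QLevolution_notations}, eliminate the truncation via $\max(Q_{i,j}[t]-\mu_{i,j}[t],0)^2\le (Q_{i,j}[t]-\mu_{i,j}[t])^2$ and $\max(Q_{i,j}[t]-\mu_{i,j}[t],0)\le Q_{i,j}[t]$, sum over $(i,j)$, condition on $\Qm[t]$, and absorb the squared terms into $C_1$ via $A_{\max},\mu_{\max}$ (the paper likewise drops the factor $\tfrac12$, so $C_1$ is the same loose constant you obtain). One remark: what you derive, namely $\Delta(\Qm[t])\le C_1-\sum_{i,j}\mathbb{E}\bigl[Q_{i,j}[t]\,(\mu_{i,j}[t]-A_{i,j}[t])\,\big|\,\Qm[t]\bigr]$, agrees with the paper's appendix derivation and with the form actually used later in \eqref{eqn:papa51}, whereas the cross term in \eqref{eqn:papa501m} as printed has its sign flipped --- a typo in the statement, not a gap in your argument.
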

The proof is provided in Appendix A, part I.

Adding the cost term (i.e. transmit power during the current time slot),
$V \mathbb{E} \big{[}  \sum_{i,j} \wv^H_{i,j} [t]\wv_{i,j}[t] \big{|} \Qm[t] \big{]}$\footnote{The conditional expectation of the transmit power with respect to $\Qm[t]$ is taken since the optimal transmit power during every time slot will be a function of the virtual queue-length values.} to both the sides of the equation
\eqref{eqn:papa501m}, 
we obtain
\begin{align}
& \Delta (\Qm[t]) + V \mathbb{E} \big{[}  \sum_{i,j} \wv^H_{i,j} [t]\wv_{i,j}[t] \big{|} \Qm[t] \big{]} \leq C_1 \nonumber \\ & - \sum_{i,j} \mathbb{E} \Big{[} Q_{i,j}[t] ( A_{i,j}[t] - \mu_{i,j}[t] )  -V  \wv^H_{i,j} [t]\wv_{i,j}[t]\Big{|} \Qm[t]\Big{]} \ \forall t. \label{eqn:papa501}
\end{align}
Henceforth, we call the term 
$\Delta_V  (\Qm[t]) \defines \Delta  (\Qm[t]) + V \mathbb{E} \big{[}  \sum_{i,j} \wv^H_{i,j} [t]\wv_{i,j}[t] \big{|} \Qm[t] \big{]}$ as the modified Lyapunov drift.
The bound on the modified Lyapunov drift for the 
queue-length evolution can be given as
\begin{align}
& \Delta(\Qm[t])  \leq C_1 +   \sum_{i,j}Q_{i,j}[t] (\lambda_{i,j}+\nu_{i,j} N_0) \nonumber \\ &   - \sum_{i,j} \mathbb{E} \Big{[} Q_{i,j}[t] \big{(} |\wv^H_{i,j} [t]\hv_{i,i,j} [t]|^2 \nonumber \\ & - \nu_{i,j} \sum_{\substack{(n,k) \\ \neq (i,j)}}|\wv^H_{n,k} [t] \hv_{n,i,j} [t]|^2 \Big{)}  -V  \wv^H_{i,j} [t]\wv_{i,j}[t]\Big{|} \Qm[t]\big{]} \ \forall t, \label{eqn:papa51}
\end{align}
where we have used the expressions for $A_{i,j}[t]$
and $\mu_{i,j}[t]$ from \eqref{eqn:Adef} and \eqref{eqn:mudef} respectively.

Following the approach of  Lyapunov optimization  (drift-plus penalty method), 
we design the beamforming vectors during each 
time slot $t$ to minimize the bound on the Lyapunov
drift during each time slot \cite{neelybook2006}. 
Before we proceed, we state the main intuition of using the Lyapunov optimization method
in the design of the algorithm.

The modified Lyapunov drift has two components, the Lyapunov
drift term $\Delta(\Qm[t]),$ and the penalty term $V \times \mathbb{E}   \big{[} \sum_{i,j} \wv^H_{i,j} [t]\wv_{i,j}[t]  \big{|} \Qm[t] \big{]}$ term. Intuitively, minimizing the Lyapunov drift term 
alone pushes the queue-length of the virtual energy queue to 
a lower value. 
The second metric $V \times \mathbb{E}   \big{[} \sum_{i,j} \wv^H_{i,j} [t]\wv_{i,j}[t]   \big{|} \Qm[t] \big{]}$ can be viewed 
as the penalty term for using high downlink power, with the parameter $V$ representing 
the trade-off between minimizing the queue-length drift
and minimizing the penalty function. Greater
value of $V$ represents greater priority 
to minimizing the downlink power at the expense of
greater size of the virtual energy queue and vice versa.
Therefore in our algorithm, we minimize the modified Lyapunov drift instead of only minimizing $\Delta(\Qm[t]) ,$
and obtain a trade-off between the two metrics of interest.
Furthermore, we theoretically examine some properties of this algorithm and analyze its
performance. In particular, we characterize the sub optimality in the performance of this algorithm with respect
to $P_{\text{inf}}$, i.e. the minimum power expenditure over all feasible control policies

Accordingly, the beamforming vector should be computed
as a solution to the following optimization problem (minimize the bound obtained in \eqref{eqn:papa51}):
\begin{align}
& \wv_{i,j}[t]  \in \argmax_{\wv_{i,j}, \forall \{i,j\} }  \sum_{i,j} \mathbb{E}_{\Hm} \Big{[}Q_{i,j}[t] |\wv^H_{i,j}\hv_{i,i,j}[t]|^2  \nonumber \\ & - \nu_{i,j} Q_{i,j} [t]   \sum_{\substack{(n,k) \\ \neq (i,j)}}|\wv^H_{n,k} \hv_{n,i,j}[t]|^2 - V  \wv^H_{i,j}\wv_{i,j} \Big{|} \Qm[t]\Big{]}.  \label{eqn:OPT_multcell_Lyp}
\end{align}
where $\mathbb{E}_{\Hm}$ indicates that the expectation is
with respect to the random channel realization.
%
\subsection{Algorithm Design with Perfect Knowledge of Local CSI at the BS}
In this section, we assume that the SCBSs have the perfect knowledge of CSI of all its downlink channels ($\hv_{i,n,k} \ \forall n,k$)  and examine the optimization problems \eqref{eqn:OPT_multcell_Lyp}. The CSI information can be obtained 
by feedback from the UTs\footnote{A more practical case where the SCBS can obtain CSI feedback from only a limited number of UTs will be addressed in Section \ref{sec:FB_BF_Opt}.}.
With the perfect knowledge of CSI, \eqref{eqn:OPT_multcell_Lyp}
reduces to greedily minimizing the term inside the expectation ($\mathbb{E} [f(Y)|Y] = f(Y)$).
Therefore, we remove the expectation and solve the following optimization problem (we drop the time index $t$).
\beqa 
&\dsp \max_{\wv_{i,j}, \forall \{i,j\} }& \sum_{i,j} \Big{[} Q_{i,j} |\wv^H_{i,j}\hv_{i,i,j}|^2 \label{eqn:OPT_multcell_Lyp_notime} \\ && - \nu_{i,j} Q_{i,j}    \sum_{\substack{(n,k) \\ \neq (i,j)}}|\wv^H_{n,k} \hv_{n,i,j}|^2 - V  \wv^H_{i,j}\wv_{i,j} \Big{]} \nonumber \\
& s.t. & \sum_{j} \wv_{i,j}^H  \wv_{i,j} \leq P_{\text{peak}}  \qquad \forall i. \nonumber
 \eeqa

We now examine \eqref{eqn:OPT_multcell_Lyp_notime} in
greater detail.
The objective function of the optimization problem in \eqref{eqn:OPT_multcell_Lyp_notime} can be rearranged and written as,
 \beqa 
&\dsp \max_{\wv_{i,j}, \forall \{i,j\} } & \sum_{i,j} \wv^H_{i,j} \Am_{i,j} \wv_{i,j} \label{eqn:opt_quadform}\\
& s.t. & \sum_{j} \wv_{i,j}^H  \wv_{i,j}\leq P_{\text{peak}}  \qquad \forall i \nonumber
 \eeqa
 where the matrix 
 $ \Am_{i,j} = Q_{i,j} \Hm_{i,i,j}-  \sum_{\substack{(n,k) \\ \neq (i,j)}} \nu_{n,k} Q_{n,k} \Hm_{i,n,k}  - V \Id$
 and $\Hm_{i,n,k} = \hv_{i,n,k} \hv^H_{i,n,k}.$
  Note that the optimization problem in \eqref{eqn:opt_quadform}
 is in separable form, in which
 each SCBS $i$ can solve the sub problem given by 
 \beqa 
&\dsp \max_{\wv_{i,j}, \forall \{i,j\} } & \sum_{j} \wv^H_{i,j} \Am_{i,j} \wv_{i,j} \label{eqn:opt_quadform_ind} \\
& s.t. & \sum_{j} \wv_{i,j}^H  \wv_{i,j}\leq P_{\text{peak}} \nonumber.  
 \eeqa
We now provide an algorithm to solve the 
optimization problem \eqref{eqn:opt_quadform_ind} and address it by the name 
Decentralized Beamforming algorithm - {\bf DBF}.
We also denote the beamforming vector corresponding to the DBF algorithm by $\wv^{\text{DBF}}_{i,j}.$
Also, let us we denote $\lambda^{\max} (\Am_{i,j})$ as the maximum eigenvalue of the matrix $\Am_{i,j}.$
\vspace{0.05in} \hrule
\vspace{0.01in}\hrule\vspace{0.05in}
\begin{algorithm}[Decentralized Beamforming algorithm - {\bf DBF}]
During each time slot $t,$ perform the following steps:
\begin{itemize}
\item Compute $ j^* = \argmax_j \lambda^{\max}(\Am_{i,j}).$
\item Set the beamforming vector corresponding to the UT $j^*$ as follows:
\begin{align}
\wv^{\text{DBF}}_{i,j^*} = \LB P_{\text{peak}}\lambda^{\max}(\Am_{i,j^*}) {\bf 1}_{\lambda^{\max}(\Am_{i,j^*}) > 0} \RB \xv_{\lambda^{\max}(\Am_{i,j^*})} \label{eqn:papa112}
\end{align}
where ${\bf 1}_{\lambda^{\max}(\Am_{i,j^*}) > 0}$ represents the indicator function whose value is $1$ if
$\lambda^{\max}(\Am_{i,j^*}) > 0,$ and $0$ otherwise, and $\xv_{\lambda^{\max}(\Am_{i,j^*})}$ is the eigenvector corresponding to the 
maximum eigenvalue of matrix $\Am_{i,j^*}.$ 
\item For all other UTs $j (\neq j^*),$ set
\begin{align}
\wv^{\text{DBF}}_{i,j} = {\bf 0} \qquad j \neq j^*.
\end{align}
\end{itemize}
\end{algorithm}
\hrule 
\vspace{0.01in} \hrule \vspace{0.05in}
From the steps of the DBF algorithm, it can be verified that the optimal direction
of transmission to solve \eqref{eqn:opt_quadform_ind} is to transmit
along the eigenvector corresponding to 
the maximum eigenvalue of the matrix $\Am_{i,j^*},$

For simplicity of notations, we denote 
$\Wm_{i,j}[t] = \wv_{i,j}[t]\wv^H_{i,j}[t]$
and hence
$\wv^H_{i,j} \Am_{i,j} \wv_{i,j} = \trace(\Am_{i,j}[t]\Wm_{i,j}[t]).$

The solution implies that at most one UT can be active per cell during each time slot. 
Also, we can conclude that
\begin{align*}
\sum_{j} &\trace(\Am_{i,j}[t]\Wm^{\text{DBF}}_{i,j}[t]) = P_{\text{peak}}\lambda^{\max}(\Am_{i,j^*}) {\bf 1}_{\lambda^{\max}(\Am_{i,j^*}) > 0}.  \end{align*}

\subsection{Properties of the DBF algorithm}
We now provide some properties of the DBF algorithm.
$\bullet$ {\bf Intuition:} 
Taking a closer look at the optimization problem \eqref{eqn:opt_quadform_ind}, it can be seen that each UT$_{i,j}$
has a metric associated with it given by,
\begin{align}
\trace & (\Am_{i,j}\Wm_{i,j})  = \trace \big{(}  Q_{i,j} \underbrace{\Hm_{i,i,j} \Wm_{i,j}}_{\text{Useful signal}} \nonumber \\ & -  \sum_{\substack{(n,k) \\ \neq (i,j)}} \nu_{n,k} Q_{n,k} \underbrace{\Hm_{i,n,k} \Wm_{i,j}}_{\text{Interference signal to other UTs}} - V \Wm_{i,j}  \big{)}. \label{eqn:Intuition}
\end{align}
The metric corresponds to the difference between weighted sum of the useful signal (to the UT$_{i,j}$) and the weighted sum of
interference caused to the other UTs in the system (UT$_{{(n,k) \\ \neq (i,j)}}).$
The weights are the corresponding queue-length values which indicate how urgently the UT needs to be served.
Therefore, intuitively, each SCBS schedules the UT in its cell which has the highest value 
of this metric $\lambda^{\max} (\Am_{i,j^*}).$
Additionally, the transmission direction corresponds to the eigenvector corresponding to the $(\lambda^{\max} ( \Am_{i,j^*})).$
The parameter $V$ represents how aggressively the SCBS decides to transmit. Higher value of $V$ implies less aggressive 
transmission and greater energy savings. 

The beamforming vector that maximizes the metric in \eqref{eqn:Intuition} 
has some similarities with
the ``Leakage-based beamforming design" \cite{SayedLeakage2007}, \cite{DingLeakageTSP2013}, \cite{DingLeakageTWC2014},
where in the beamforming vectors are chosen to maximize the SLNR metric. 
Since the original QoS metric considered in this work is in the difference form (as opposed to the ratio form of
the SLNR metric), our solution chooses the beamforming vector that maximizes the
weighted difference between the useful signal and the interference caused to the other UTs in the system (and not the ratio as in the case of SLNR).
Moreover, the useful signal power and the interference signal powers are scaled by the respective virtual queue lengths during that time slot. 
Therefore, our algorithm can be viewed as a ``dynamic time varying leakage based algorithm" where the impacts of useful signal and interference 
signal are adapted dynamically according to the achieved QoS (represented by the virtual queue length levels). 
\\
$\bullet$ {\bf Decentralized Solution:} Observe that in order to formulate the matrix $\Am_{i,j},$ the SCBSs
only require the local CSI ($\hv_{i,n,k} \ \forall n,k$). The SCBSs would only have to exchange 
the queue-lengths ($Q_{i,j}$) among themselves.
Therefore, our formulation naturally leads to a decentralized solution. This results in tremendous reduction in the backhaul 
capacity requirements. 

\subsection{Performance bounds for the DBF algorithm:} 
The following proposition provides the performance bounds 
for the DBF algorithm:
\begin{proposition}
\label{prop:DBFperformance}
The DBF algorithm yields the following performance bounds.
The virtual queue is strongly stable and for any $\epsilon> 0$,$V \geq 0,$ the time average queue-length satisfies
\begin{align}
\sum_{i,j}  \bar{Q}^{\text{DBF}}_{i,j} \leq \frac{C_1+V NK P_{\text{peak}}}{\epsilon}
\label{eqn:DBFRes1}
\end{align}
and the time average energy expenditure yields, 
\begin{align}
\sum_{i} \bar{P}^{\text{DBF}}_i
\leq  P_{\inf}+\frac{C_1}{V} \label{eqn:DBFRes2}.
\end{align}
\end{proposition}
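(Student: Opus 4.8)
The plan is to treat the statement as a standard drift-plus-penalty argument, using the modified drift bound of Proposition~\ref{prop:QLBoundm} together with the defining property of the DBF algorithm: by construction it maximizes the objective in \eqref{eqn:opt_quadform_ind}, which is exactly the quantity that must be made as large as possible in order to \emph{minimize} the right-hand side of the modified drift bound \eqref{eqn:papa51}. Hence, at every slot, DBF attains the smallest possible value of $\Delta_V(\Qm[t])$ among all beamformers obeying the peak-power constraint \eqref{eqn:const11}. The workhorse inequality is therefore that, for \emph{any} alternative (possibly stationary randomized) policy $\Omega$ satisfying \eqref{eqn:const11}, rewriting the bound through \eqref{eqn:Adef}--\eqref{eqn:mudef} in terms of the arrival and service processes of $\Omega$ gives
\begin{align}
\Delta_V(\Qm[t]) \leq{}& C_1 + \sum_{i,j} Q_{i,j}[t]\,\mathbb{E}\big[A^{\Omega}_{i,j}[t]-\mu^{\Omega}_{i,j}[t] \,\big|\, \Qm[t]\big] \nonumber \\
&+ V\,\mathbb{E}\Big[\sum_i P^{\Omega}_i[t] \,\Big|\, \Qm[t]\Big]. \label{eqn:DBFcompare}
\end{align}
Both bounds follow from \eqref{eqn:DBFcompare} by a judicious choice of $\Omega$.

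For the queue bound \eqref{eqn:DBFRes1}, I would take $\Omega$ to be a stationary randomized policy that meets every QoS target with $\epsilon$-slack, i.e.\ $\mathbb{E}[A^{\Omega}_{i,j}[t]-\mu^{\Omega}_{i,j}[t]]\leq-\epsilon$ for all $(i,j)$; such a policy exists whenever the targets are strictly feasible. Substituting it into \eqref{eqn:DBFcompare}, bounding its expected total power crudely by $NKP_{\text{peak}}$, and discarding the nonnegative DBF power term contained in $\Delta_V$ leaves $\Delta(\Qm[t]) \leq C_1 + V N K P_{\text{peak}} - \epsilon\sum_{i,j} Q_{i,j}[t]$. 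Taking expectations, summing over $t=0,\dots,T-1$ so the drift telescopes, using $\Psi(\Qm[T])\geq 0$ with $\Qm[0]=\zerov$, dividing by $\epsilon T$, and letting $T\to\infty$ yields \eqref{eqn:DBFRes1}, which in particular establishes strong stability.

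For the power bound \eqref{eqn:DBFRes2}, I would instead take $\Omega$ to be a feasible stationary randomized policy, $\mathbb{E}[A^{\Omega}_{i,j}[t]-\mu^{\Omega}_{i,j}[t]]\leq 0$, whose expected total power equals $P_{\inf}$. Now every queue term in \eqref{eqn:DBFcompare} multiplies a nonpositive quantity, so the middle sum is $\leq 0$ and we are left with $\Delta(\Qm[t]) + V\,\mathbb{E}[\sum_i P_i[t]\mid\Qm[t]] \leq C_1 + V P_{\inf}$. Summing this over a horizon of length $T$, telescoping the drift, discarding the nonnegative terminal term $\Psi(\Qm[T])$, dividing by $VT$, and letting $T\to\infty$ produces \eqref{eqn:DBFRes2}.

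The main obstacle is justifying the existence of the two comparison policies and the validity of plugging them into \eqref{eqn:DBFcompare}. This rests on the characterization theorem of Lyapunov optimization \cite{neelybook2010}: because the channel process $\{\Hm[t]\}$ is i.i.d.\ across slots, the infimum power $P_{\inf}$ over all feasible control policies is attained (or approached arbitrarily closely) by a stationary randomized policy whose decision depends only on the current $\Hm[t]$, and any strictly feasible target likewise admits an $\epsilon$-slack stationary policy. Granting these, the per-slot optimality of DBF makes \eqref{eqn:DBFcompare} hold for each realization of $\Qm[t]$, and the two conclusions then reduce to the routine telescoping steps sketched above.
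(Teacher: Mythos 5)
Your proposal is correct and follows essentially the same drift-plus-penalty argument as the paper: both rest on the per-slot optimality of DBF in minimizing the bound \eqref{eqn:papa51}, a comparison against a stationary randomized policy whose existence is guaranteed for i.i.d.\ channel states, and a telescoping sum of the drift. The only cosmetic difference is that you invoke two separate comparison policies (an $\epsilon$-slack one for the queue bound and a zero-slack, $P_{\inf}$-achieving one for the power bound), whereas the paper uses a single policy satisfying \eqref{eqn:papa301}--\eqref{eqn:papa302} to extract both bounds from the one inequality \eqref{eqn:papa880}; your split is, if anything, slightly cleaner, since it avoids attributing power exactly $P_{\inf}$ to a policy that must also provide $\epsilon$-slack.
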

\begin{proof}
The proof is provided in Appendix B, part II.
\end{proof}
The bound of Proposition \ref{prop:DBFperformance} implies that the time average energy expended by the DBF 
algorithm can be made arbitrarily close to the minimum average power (over all possible sequence on control actions)
by increasing the value of $V$ to an arbitrarily high value. 
This comes at the expense of increasing the average queue-length of 
the virtual queue. 
Intuitively, a high value of the average queue-length implies 
that the number of time slots required to satisfy the time average 
constraints is higher (analogous to the concept of delay in real queues).

\subsection{A Note on using the time average SINR for algorithm design}
Similar to the derivation at the beginning of \eqref{eqn:papa51}, it can be shown that the use of time average SINR
(as the QoS metric) in the Lyapunov optimization method leads to
solving the following instantaneous optimization
problem during each time slot $t:$
\beqa 
&\dsp \max_{\wv \in \CC^{N_t}} & \sum_{i,j} \Big{[} Q_{i,j}[t] \text{SINR}_{i,j} - V  \wv^H_{i,j}\wv_{i,j}  \Big{]} \label{eqn:OPT_multcell_Lyp_notime_SINR}   \\
& s.t. & \sum_{j} \wv_{i,j}^H  \wv_{i,j} \leq P_{\text{peak}}  \qquad \forall i. \nonumber
 \eeqa
Notice that the optimization
problem in \eqref{eqn:OPT_multcell_Lyp_notime_SINR}
corresponds to solving a maximization of the weighted sum of 
SINR terms, which is a non-convex problem,
and finding the global optimum is a non-trivial task.
Additionally, it is very difficult to obtain a decentralized
formulation of the solution corresponding to
\eqref{eqn:OPT_multcell_Lyp_notime_SINR}, and to develop an efficient CSI feedback strategy in the case when there are CSI feedback constraints.
This highlights the functional importance of using the
QoS metric in \eqref{eqn:QoS2constraint} in our algorithm design.

Next, we introduce delays in the information 
exchange among the SCBSs.

\section{Delayed Queue-length Information Exchange}
\label{sec:DelInfoExchg}
In this section, we make a novel contribution by studying the impact of delayed information exchange between SCBSs (such as the delays introduced in the backhaul).
Recall that in our beamforming solution, the SCBSs can formulate
the beamforming vectors using only the local CSI.
However, the SCBSs have to exchange the queue length information (scalar values). In this section, we show that this exchange does not have to be in real time, and our solution can be made arbitrary close to infimum even if the exhcange is done with delay. We study the proposed solution in presence of delayed information exchange between SCBSs, and show the impact of this delay on the gap between our solution, and the minimum transmit power of the original stochastic problem.

\subsection{Algorithm Design with Delayed Information Exchange between the BSs}
Let us assume that a delay of $\tau < \infty$ time slots is incurred 
while the SCBSs exchange the queue-length information.
Each SCBS $i$ now has perfect queue-length information of its local queues ($Q_{i,j}[t] \ \forall j$) and the delayed queue-length
information from the neighboring queues ($Q_{n,k}[t-\tau], \ \forall n \neq i,k$).
Note that our set up can be easily generalized to introduce different delays $\tau_n, \forall n \neq i$ 
corresponding to the queue-length information from different SCBSs. However, in order to keep the notations simple, we restrict ourselves to uniform delays ($\tau, \ \forall n \neq i$).
We assume that the SCBSs treat the delayed queue-length 
as the true value of the queue-length.
Every SCBS now solves the following optimization problem,
\beqa 
&\dsp \max_{\wv} & \sum_{j}  \trace \LB \Am^\tau_{i,j} [t] \Wm_{i,j} \RB  \label{eqn:OPT_multcellLyp_delayed} \\
& s.t. & \sum_j \trace(\Wm_{i,j}) \leq P_{\text{peak}}  \nonumber
 \eeqa
 where the matrix $\Am^\tau_{i,j}[t]$ is given by,
 \begin{align}
\Am^\tau_{i,j}[t] & = Q_{i,j}[t] \Hm_{i,i,j}(t)- \sum_{\substack{k \neq j}} \nu_{i,k} Q_{i,k}[t] \Hm_{i,i,k}[t]  \nonumber
\\& -  \sum_{\substack{n \neq i, k}} \nu_{n,k} Q_{n,k}[t-\tau] \Hm_{i,n,k}[t]  - V \Id.
\end{align}  
Let us denote the solution corresponding to optimization problem \eqref{eqn:OPT_multcellLyp_delayed}
by $\Wm^{\text{del}}[t].$ We will henceforth use the superscript "del" to
denote parameters corresponding to the solution of \eqref{eqn:OPT_multcellLyp_delayed}.
Once again, following similar argument as 
the solution to \eqref{eqn:opt_quadform_ind},
it can be shown that at most one UT can be active per cell.
\vspace{0.05in} \hrule
\vspace{0.01in}\hrule\vspace{0.05in}
\begin{algorithm}[DBF algorithm with delayed information exchange]
During each time slot $t,$ perform the following steps:
\begin{itemize}
\item Compute $j^*_\tau = \argmax_{j} \lambda^{\max}(\Am^\tau_{i,j}).$
\item Set the beamforming vector corresponding to the UT $j^*$ as follows:
\begin{align}
\wv^{\text{del}}_{i,j^*_\tau } = \LB P_{\text{peak}} \lambda^{\max}(\Am_{i,j^*_\tau}) {\bf 1}_{\lambda^{\max}(\Am_{i,j^*_\tau}) > 0} \RB \xv_{\lambda^{\max}(\Am_{i,j^*_\tau})} \label{eqn:papa1112}
\end{align}
where ${\bf 1}_{\lambda^{\max}(\Am_{i,j^*_\tau}) > 0}$ represents the indicator function whose value is $1$ if
$\lambda^{\max}(\Am_{i,j^*_\tau}) > 0,$ and $0$ otherwise, and $\xv_{\lambda^{\max}(\Am_{i,j^*_\tau})}$ is the eigenvector corresponding to the 
maximum eigenvalue of matrix $\Am_{i,j^*_\tau}.$ 
\item For all other UTs $j (\neq j^*_\tau),$ set
\begin{align}
\wv^{\text{del}}_{i,j} = {\bf 0} \qquad j \neq j^*_\tau.
\end{align}
\end{itemize}
\end{algorithm}
\hrule 
\vspace{0.01in} \hrule \vspace{0.05in}

It is clear that the optimal 
power allocation policy is given by 
\begin{equation}
P^{\text{del}}_{i,j} = \begin{cases} P_{\text{peak}}  & \ \text{if} \ j = j^*_\tau \text{and} \ \lambda^{\max}(\Am_{i,j^*_\tau}) > 0  \\
                      0  & \text{else},                     
                   \end{cases} 
\end{equation}
and therefore,
\begin{align}
\max_{\wv_{i,j}} \sum_{j} & \trace(\Am^\tau_{i,j}[t]\Wm^{\text{del}}_{i,j}[t])   = P_{\text{peak}}\lambda^{\max}(\Am_{i,j^*_\tau}) {\bf 1}_{\lambda^{\max}(\Am_{i,j^*_\tau}) > 0}.  \label{eqn:papa8EE2}
\end{align}
We ow theoretically examine the performance of the DBF algorithm with delayed queue-length information.

\subsection{Performance Analysis with Delayed Information Exchange Between the BSs}
We first compare the performance of DBF algorithm with respect to the solution of \eqref{eqn:OPT_multcellLyp_delayed} (i.e. the case with delayed queue-length exchange) in the following lemma.
\begin{lemma}
\label{lem:QLDelBd}
There exists a constant $0 \leq C_2 < \infty$ independent of the current queue-length $Q_{i,j}[t], \ \forall i,j$ such that,
\begin{align}
\sum_{i,j}\trace(\Am_{i,j}[t]\Wm^{\text{DBF}}_{i,j}[t]) \leq  \sum_{i,j}\trace(\Am_{i,j}[t]\Wm^{\text{del}}_{i,j}[t])+C_2 \ \forall t. \label{eqn:lemma_del}
\end{align}
\end{lemma}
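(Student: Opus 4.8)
The plan is to exploit that each beamforming solution is optimal for its \emph{own} objective, and that the true and delayed objectives differ only through queue-length values that have drifted over $\tau$ slots. Introduce the two objective functionals $f(\Wm) \defines \sum_{i,j}\trace(\Am_{i,j}[t]\Wm_{i,j})$ and $g(\Wm) \defines \sum_{i,j}\trace(\Am^\tau_{i,j}[t]\Wm_{i,j})$. Since the optimization problems \eqref{eqn:opt_quadform_ind} and \eqref{eqn:OPT_multcellLyp_delayed} share the identical feasible set (rank-one $\Wm_{i,j}=\wv_{i,j}\wv^H_{i,j}$ with $\sum_j \trace(\Wm_{i,j}) \leq P_{\text{peak}}$), the matrix $\Wm^{\text{DBF}}$ maximizes $f$, the matrix $\Wm^{\text{del}}$ maximizes $g$, and each solution is feasible for the other problem. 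Note that \eqref{eqn:lemma_del} is the nontrivial direction: the reverse inequality (without $C_2$) is immediate from the $f$-optimality of $\Wm^{\text{DBF}}$, so the content is to control how much \emph{better} $\Wm^{\text{DBF}}$ can score than $\Wm^{\text{del}}$ under $f$.

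First I would telescope through the delayed objective:
\begin{align}
f(\Wm^{\text{DBF}}) - f(\Wm^{\text{del}}) &= \big[\, f(\Wm^{\text{DBF}}) - g(\Wm^{\text{DBF}}) \,\big] \nonumber \\ &\quad + \big[\, g(\Wm^{\text{DBF}}) - g(\Wm^{\text{del}}) \,\big] \nonumber \\ &\quad + \big[\, g(\Wm^{\text{del}}) - f(\Wm^{\text{del}}) \,\big]. \nonumber
\end{align}
The middle bracket is nonpositive because $\Wm^{\text{del}}$ maximizes $g$ over the common feasible set, so it remains to bound the first and third brackets, i.e. to bound $|f(\Wm)-g(\Wm)|$ \emph{uniformly} over all feasible $\Wm$.

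The key step is that $\Am_{i,j}[t]$ and $\Am^\tau_{i,j}[t]$ differ only in the non-local queue terms, namely $\Am_{i,j}[t]-\Am^\tau_{i,j}[t] = -\sum_{n\neq i,k}\nu_{n,k}\big(Q_{n,k}[t]-Q_{n,k}[t-\tau]\big)\Hm_{i,n,k}[t]$, so the objective gap enters only through the increment $Q_{n,k}[t]-Q_{n,k}[t-\tau]$. From the queue recursion \eqref{eqn:QLevolution_notations} together with the bounds \eqref{eqn:UpBoundQoS21}--\eqref{eqn:UpBoundQoS22}, a single-slot increment satisfies $|Q_{n,k}[s+1]-Q_{n,k}[s]| \leq \max(A_{\max},\mu_{\max})$ regardless of the current backlog, whence $|Q_{n,k}[t]-Q_{n,k}[t-\tau]| \leq \tau \max(A_{\max},\mu_{\max})$, a constant independent of $Q_{i,j}[t]$. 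Using $\trace(\Hm_{i,n,k}[t]\Wm_{i,j}) = |\hv^H_{i,n,k}[t]\wv_{i,j}|^2 \leq H_{\max}P_{\text{peak}}$, $\nu_{n,k}\leq\nu_{\max}$, and summing over the at most $(NK)^2$ index pairs gives $|f(\Wm)-g(\Wm)| \leq C_2'$ for every feasible $\Wm$, with $C_2' = (NK)^2 \nu_{\max}\,\tau\,\max(A_{\max},\mu_{\max})\,H_{\max}P_{\text{peak}}$; taking $C_2 = 2C_2'$ establishes \eqref{eqn:lemma_del}.

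The main obstacle is the uniform, queue-length-independent bound on $Q_{n,k}[t]-Q_{n,k}[t-\tau]$: everything hinges on showing the per-slot change is controlled by the constants $A_{\max},\mu_{\max}$ rather than by the (unbounded) current queue backlog. This independence from $Q_{i,j}[t]$ is exactly what makes $C_2$ a genuine constant, and is the property that later allows the delay penalty to be folded into the drift-plus-penalty bound without degrading the asymptotic power guarantee of Proposition \ref{prop:DBFperformance}.
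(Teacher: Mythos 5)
Your proof is correct and follows essentially the same route as the paper's: both arguments combine the two optimality inequalities (DBF optimal for the true objective, the delayed solution optimal for the delayed objective) with a uniform, backlog-independent bound on $|f(\Wm)-g(\Wm)|$ obtained from the per-slot queue increment bound $|Q_{n,k}[t]-Q_{n,k}[t-\tau]|\leq\tau\max(A_{\max},\mu_{\max})$ and the boundedness of $\trace(\Hm_{i,n,k}\Wm_{i,j})$ by $H_{\max}P_{\text{peak}}$. The only difference is cosmetic: the paper exploits the one-active-UT-per-cell structure to state a slightly tighter constant ($C_2=NB_1$), whereas your $(NK)^2$-based constant is looser but equally valid for the lemma's purpose.
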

\begin{proof}
The lemma is proved in Appendix B, part I.
\end{proof}
Lemma \ref{lem:QLDelBd} states that the performance of the DBF algorithm 
with delayed queue-length information exchange
differs from that of DBF algorithm (with instantaneous queue-length
information exchange) by a bounded constant.
The key element in this lemma is the fact that the constant $C_2$ is independent of the current queue-lengths
which will be helpful
in proving the performance bounds for the
DBF algorithm with delayed queue-length information
exchange.
\begin{corollary}
\label{thm:DelQL}
For the DBF algorithm with delayed queue-length information exchange, the following
performance can be obtained for any $\epsilon >0$ and $V \geq 0$: 
The time average transmit power expenditure satisfies
\begin{align}
 \sum^N_{i = 1} \bar{P}^{\text{del}}_i
\leq  P_{\inf}+\frac{C_1+C_2}{V}. \label{eqn:PowBd_Del}
\end{align}
and the time average queue-length satisfies
\begin{align}
 \sum_{i,j} \bar{Q}^{\text{del}}_{i,j} \leq \frac{C_1+C_2+V NK P_{\text{peak}}}{\epsilon} \label{eqn:QLBd_Del}
\end{align}
where $C_1$ is defined in \eqref{eqn:C1defm} and $C_2$ defined from Lemma \ref{lem:QLDelBd}.
\end{corollary}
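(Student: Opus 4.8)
The plan is to show that Lemma~\ref{lem:QLDelBd} lets the entire drift-plus-penalty analysis behind Proposition~\ref{prop:DBFperformance} be replayed for the delayed algorithm after the single bookkeeping change $C_1 \mapsto C_1 + C_2$. The starting point is the modified-drift bound \eqref{eqn:papa51}, whose control-dependent part is precisely $\mathbb{E}[\sum_{i,j}\trace(\Am_{i,j}[t]\Wm_{i,j}[t])\,|\,\Qm[t]]$ (recall $\Wm_{i,j}=\wv_{i,j}\wv^H_{i,j}$). Crucially, the matrices $\Am_{i,j}[t]$ appearing here are assembled from the \emph{true} current queue-lengths $Q_{i,j}[t]$, since they come from the actual queue recursion \eqref{eqn:QLevolution_notations}; the delayed controller, however, selects $\Wm^{\text{del}}[t]$ by maximizing the stale objective $\sum_j\trace(\Am^\tau_{i,j}[t]\Wm_{i,j})$.

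First I would specialize \eqref{eqn:papa51} to the played policy $\Wm^{\text{del}}[t]$ and use Lemma~\ref{lem:QLDelBd} to replace its true-matrix objective: since $\sum_{i,j}\trace(\Am_{i,j}[t]\Wm^{\text{del}}_{i,j}[t]) \geq \sum_{i,j}\trace(\Am_{i,j}[t]\Wm^{\text{DBF}}_{i,j}[t]) - C_2$, substituting yields
\begin{align}
\Delta_V^{\text{del}}(\Qm[t]) \leq (C_1+C_2) + \sum_{i,j}Q_{i,j}[t](\lambda_{i,j}+\nu_{i,j}N_0) - \mathbb{E}\Big[\sum_{i,j}\trace(\Am_{i,j}[t]\Wm^{\text{DBF}}_{i,j}[t])\,\Big|\,\Qm[t]\Big]. \nonumber
\end{align}
This is term-for-term the drift bound that drives the proof of Proposition~\ref{prop:DBFperformance}, with the sole change that the additive constant is now $C_1+C_2$.

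From here I would reuse the two stationary-policy comparisons verbatim. Because $\Wm^{\text{DBF}}$ maximizes $\sum_{i,j}\trace(\Am_{i,j}[t]\Wm_{i,j})$, its objective dominates that of any feasible policy $\Wm^*$, so the right-hand side above is further bounded by $(C_1+C_2) + \sum_{i,j}Q_{i,j}[t]\,\mathbb{E}[A^*_{i,j}-\mu^*_{i,j}\,|\,\Qm[t]] + V\,\mathbb{E}[\sum_{i,j}\wv^{*H}_{i,j}\wv^*_{i,j}]$ after collecting the $Q_{i,j}(\lambda_{i,j}+\nu_{i,j}N_0)$ term into the arrival/service difference. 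Choosing the optimal stationary randomized policy (which satisfies $\mathbb{E}[A^*-\mu^*]\leq 0$ and attains time-average power $P_{\inf}$) kills the $Q$-weighted term, and telescoping $\Delta_V^{\text{del}}=\Delta^{\text{del}}+V\mathbb{E}[\sum P^{\text{del}}_i]$ over $t=0,\dots,T-1$ with $\Psi(\Qm[\cdot])\geq 0$ gives \eqref{eqn:PowBd_Del} after dividing by $VT$ and letting $T\to\infty$. Choosing instead an $\epsilon$-slack feasible policy ($\mathbb{E}[A^*-\mu^*]\leq -\epsilon$) turns the $Q$-weighted term into $-\epsilon\sum_{i,j}Q_{i,j}[t]$ while the penalty term is bounded by $VNKP_{\text{peak}}$; the same telescoping then yields \eqref{eqn:QLBd_Del}.

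The step I expect to demand the most care is the legitimacy of the substitution in the displayed inequality: \eqref{eqn:papa51} is valid for whatever beamformer the controller actually plays, but the objective it contains is scored against the true $\Am_{i,j}[t]$ rather than the stale $\Am^\tau_{i,j}[t]$ the delayed controller optimized. Lemma~\ref{lem:QLDelBd} is exactly the device that closes this mismatch, and its guarantee that $C_2$ is \emph{independent of the current queue-lengths} is what prevents the telescoped sum from accumulating an unbounded, $T$-growing error; without that independence the constant would scale with $\sum_{i,j}Q_{i,j}[t]$ and the time-average bounds would fail. Everything downstream is a mechanical repetition of the Proposition~\ref{prop:DBFperformance} argument.
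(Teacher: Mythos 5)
Your proposal is correct and follows essentially the same route as the paper: specialize the drift bound \eqref{eqn:papa51} to the played policy $\Wm^{\text{del}}$, invoke Lemma \ref{lem:QLDelBd} to trade its objective for the DBF objective at the cost of the queue-independent constant $C_2$, and then rerun the stationary-randomized-policy comparison and telescoping from the proof of Proposition \ref{prop:DBFperformance} with $C_1$ replaced by $C_1+C_2$. Your emphasis on the queue-length independence of $C_2$ is exactly the point the paper's Lemma \ref{lem:QLDelBd} is designed to secure.
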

Corollary \ref{thm:DelQL} is proved in Appendix B, part II.

The corollary shows that by increasing the value of $V,$
the time average power expenditure can be made very close 
to the optimal value of the original problem, i.e., the impact
of the delays in information exchange can be made negligible.
However, this will have an impact on the time needed 
to achieve the time average QoS constraint (i.e., we need 
greater number of time slots to achieve the QoS).
This result shows the SCBSs do not have to exchange the queue-length information in real time. The SCBSs can delay this exchange, and even adapt it to the backhaul capacity/load. Under some cases, the impact of the delay in information exchange can be even made negligible (for e.g., choosing a high value of the parameter $V$).

\section{Joint optimization
of the CSI Feedback and Transmission}
\label{sec:FB_BF_Opt}
In this Section, we solve the problem of joint CSI feedback and beamforming design by using the Lyapunov optimization framework. The solution itself is not straightforward from the Lyapunov optimization technique, and requires several intermediate proofs that are presented in Theorem \ref{thm:QLFedBd}.

\subsection{Algorithm Design : Joint CSI Feedback and Transmission}
{\bf CSI Feedback Model :} In practice, the UTs feedback a quantized
version of the CSI to the SCBS.
However, in this work, owing the complexity of the 
joint feedback and beamforming design problem, 
we consider a simple feedback scheme.
Under this scheme, we assume that 
if SCBS$_i$ decides to obtain the CSI feedback from the UT$_{n,k},$
then the UT can feedback this information perfectly. 
Hence, SCBS$_i$ has the knowledge of the exact value of $\hv_{i,n,k}.$
For the rest of the UTs (that do not feedback their CSI), SCBS$_i$ assumes the channel to be the mean value, i.e. $\mathbb{E}[\hv_{i,n,k}].$ Consequently, the SCBS must decide which of the
UTs have to feedback their CSI\footnote{The impact of channel 
quantization errors will be addressed in the future work}.
We impose the following feedback constraint: during every time slot $t,$ the SCBS can obtain feedback from at-most $B_{\max}$
UTs (recall that in practice, the SCBS can obtain the feedback
from only a subset of the UTs). 
We denote the indicator variable for the CSI feedback
decision by
 \begin{equation}
b_{i,n,k} = \begin{cases} 1  & \ \text{if} \ \text{UT$_{n,k}$ feeds back its CSI to SCBS$_i$}  \\
                      0  & \text{else}.                     
                   \end{cases} \nonumber
\end{equation}
The feedback constraint can then be stated as 
\begin{align}
\sum_{n,k} b_{i,n,k} \leq B_{\max}. \label{eqn:FBscheme2}
\end{align}
 Let us denote
 $\bv_{i,j} = [b_{i,j,1},\dots,b_{i,j,K}];$
 and $\bv_{i} = [\bv_{i,1},\dots,\bv_{i,N}]$
 and $\bv = [\bv_{1},\dots,\bv_{N}].$
Also, for simplicity of notations, we henceforth denote
\begin{align*}
\tilde{\Hm}_{i,n,k} & = \Hm_{i,n,k} - \mathbb{E} [ \Hm_{i,n,k}] \\
\bar{\Hm}_{i,n,k} & = \mathbb{E} [ \Hm_{i,n,k}] \\
\tilde{Q}_{i,n,k} & = {Q}_{i,n,k} \sigma_{i,n,k}.
\end{align*}
The joint CSI feedback decision and transmission algorithm can be derived by following steps similar to
Proposition \ref{prop:QLBoundm} and the beamforming vector
design of \eqref{eqn:OPT_multcell_Lyp_notime} (omitted here for brevity). 
In what follows, we first present 
the joint CSI feedback decision and transmission rule obtained from the Lyapunov optimization technique (for the case with partial CSI), 
and then present the results on the performance analysis.

\vspace{0.05in} \hrule
\vspace{0.01in}\hrule\vspace{0.05in}
\begin{algorithm}[Joint CSI Feedback and Transmission Algorithm]
During each time slot $t,$ perform the following steps:
\begin{itemize}
\item Choose the CSI feedback decision as a solution to the following optimization problem:
\begin{align}
& \bv^{\text{FB}}  = \nonumber \\
& \max_{\bv}  \mathbb{E} \Big{[} \max_{\Pm} \sum_{i,j} P_{i,j}  \lambda^{\max}\Big{(} Q_{i,j} ( b_{i,i,j}\tilde{\Hm}_{i,i,j} + \bar{\Hm}_{i,i,j}) \nonumber  \\ & \qquad - \sum_{\substack{(n,k) \\ \neq (i,j)}} \nu_{n,k} Q_{n,k} {(}  b_{i,n,k} \tilde{\Hm}_{i,n,k} + \bar{\Hm}_{i,n,k}) - V \Id \Big{)} \Big{]},  \label{eqn:FBcostfn}
\end{align}
where we denote $\bv^{\text{FB}}$ to be the resulting
CSI feedback decision.
\item After computing $\bv^{\text{FB}},$ 
if $b^{\text{FB}}_{i,n,k} = 1,$ set the channel vector from SCBS$_i$
to UT$_{n,k}$ to $\hv_{i,n,k}$ (i.e., the actual value of the channel realization).
Else, set the channel vector from SCBS$_i$
to UT$_{n,k}$ to $\mathbb{E}[\hv_{i,n,k}].$ After setting the corresponding channel values,
compute the beamforming vectors by repeating the steps
of Algorithm 1.
\end{itemize}
\end{algorithm}
\hrule 
\vspace{0.01in} \hrule \vspace{0.05in}

\subsection*{A Low Complexity Algorithm to Solve \eqref{eqn:FBcostfn}}
Notice that the feedback design problem in \eqref{eqn:FBcostfn} is a stochastic optimization problem with discrete variables $\bv,$ and over the power allocation $\Pm$. 
We propose the following 
low complexity implementation
scheme, labeled $AlgF$ to
optimally solve \eqref{eqn:FBcostfn}.
Before we do so, let us define
\begin{align}
& \tilde{\Qm}_{ii} = \{\tilde{Q}_{i,1},\tilde{Q}_{i,2},\dots,\tilde{Q}_{i,K} \} \\
& \tilde{\Qm}_{ii_-} = \{\nu_{n,1}\tilde{Q}_{n,1},\nu_{n,2}\tilde{Q}_{n,2},\dots,\nu_{n,K}\tilde{Q}_{n,K} \} \ \ \forall n\neq i.
\end{align}
$AlgF$ can be implemented in the following manner:

\vspace{0.05in} \hrule
\vspace{0.01in}\hrule\vspace{0.05in}
\begin{algorithm}
[$AlgF$]
Perform the following steps.
\begin{itemize}
\item Initialize $b_{i,n,k} = 0,  \forall n,k.$ 
\item Choose the index $j^\prime = \argmax_{j} \tilde{\Qm}_{ii}$ and 
$k^\prime = \argmax_{k,n\neq i} \tilde{\Qm}_{ii_-}.$ Then, select either
the index $j^\prime$ or $k^\prime$ that maximizes the cost function in \eqref{eqn:FBcostfn}. If $j^\prime$ 
is selected, then set $b_{i,i,j^\prime} = 1$ or if $k^\prime$
is selected, then set $b_{i,n,k^\prime} = 1.$
\item If $j^\prime$ 
is selected, then update $\tilde{\Qm}_{ii}$ as $\tilde{\Qm}_{ii} = \tilde{\Qm}_{ii} \setminus \tilde{Q}_{i,j^\prime}$  
or if $k^\prime$ 
is selected, then update $\tilde{\Qm}_{ii_-}$ as $\tilde{\Qm}_{ii_-} = \tilde{\Qm}_{ii_-} \setminus \tilde{Q}_{n,k^\prime}.$  
If $\sum_{n,k} b_{i,n,k} \leq B_{\max},$ go to step 2, else terminate.
\end{itemize}
\end{algorithm}
\hrule 
\vspace{0.01in} \hrule \vspace{0.05in}
The aforementioned algorithm requires the computation of the max eigenvalues for $2 B_{\max}$ UTs. This results in a huge complexity reduction as compared to the exhaustive  search method that requires the computation of the max eigenvalues for $\binom{NK}{B_{\max}}$ combinations. 

It is worth noting that, in practice, the UTs in other cells i.e. $(n\neq i,k)$ may not be allowed to feedback their CSI to base station $i$. Under this condition, a similar feedback problem can be reformulated. The constraint \eqref{eqn:FBscheme2} is now over the UTs of cell $i$.  Our feedback algorithm can be modified taking this constraint into account, which results in a simple algorithm that consists of selecting the $B_{\max}$ UTs that have the highest $\tilde{Q}_{i,j}$ (the computation of the maximum eigenvalue is not needed in this case). 

\subsection{Algorithm Performance Analysis}
In this subsection, we provide the performance analysis of both $AlgF$ and the joint CSI feedback decision and transmission algorithm.
\begin{theorem}
\label{thm:QLFedBd}
The proposed feedback algorithm $AlgF$ is the optimal solution to the problem in \eqref{eqn:FBcostfn}.
\end{theorem}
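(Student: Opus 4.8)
The plan is to show that $AlgF$ produces the same objective value as the optimal solution of \eqref{eqn:FBcostfn}, and hence is optimal. The key structural observation is that the cost function in \eqref{eqn:FBcostfn} involves, for each SCBS$_i$, the maximum eigenvalue $\lambda^{\max}(\Am_{i,j})$ of a matrix whose entries depend on the feedback decisions $b_{i,n,k}$ only through the choice of whether to use the actual channel $\Hm_{i,n,k}$ (contributing $\tilde{\Hm}_{i,n,k}+\bar{\Hm}_{i,n,k}$) or its mean $\bar{\Hm}_{i,n,k}$. Since the DBF rule (Algorithm 1) activates at most one UT per cell, namely $j^* = \argmax_j \lambda^{\max}(\Am_{i,j})$, the inner maximization over $\Pm$ selects a single eigenvalue term, and the outer problem over $\bv$ separates across the $N$ base stations. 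I would therefore reduce the proof to a \emph{per-cell} argument: for each $i$, show that $AlgF$ selects the $B_{\max}$ feedback indices that maximize $\max_j \lambda^{\max}(\Am_{i,j})$ subject to \eqref{eqn:FBscheme2}.

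First I would establish a monotonicity/sensitivity lemma quantifying how adding one feedback bit $b_{i,n,k}:0\to1$ changes the relevant maximum eigenvalue. The decision to obtain feedback from UT$_{n,k}$ replaces $\bar{\Hm}_{i,n,k}$ by $\tilde{\Hm}_{i,n,k}+\bar{\Hm}_{i,n,k}$ inside $\Am_{i,j}$; using Weyl's inequality and the rank-one structure $\Hm_{i,n,k}=\hv_{i,n,k}\hv^H_{i,n,k}$, the marginal gain in $\lambda^{\max}$ can be expressed in terms of $\tilde{Q}_{i,j}$ (for the useful-signal term, $n=i$) or $\nu_{n,k}\tilde{Q}_{n,k}$ (for the interference terms, $n\neq i$). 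The essential point is that the gain is \emph{ordered} by these scaled queue-lengths, which is exactly why $AlgF$ forms the two candidate sets $\tilde{\Qm}_{ii}$ and $\tilde{\Qm}_{ii_-}$ and greedily picks the largest element from their union at each step.

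Next I would argue the greedy choice is globally optimal. The natural route is to show the objective, as a function of the selected index set, is a \emph{monotone submodular} set function (or, more elementarily, that the marginal gains are independent across indices once the active UT $j^*$ is fixed), so that the greedy rule of $AlgF$ — repeatedly taking the index of largest marginal value until the budget $B_{\max}$ is exhausted — coincides with the exact optimum rather than merely a $(1-1/e)$ approximation. I would make this rigorous by fixing the identity of the served UT $j^*$ in cell $i$, observing that each feedback bit then contributes an independent nonnegative increment to the corresponding $\lambda^{\max}$ term, and invoking the exchange argument that sorting by increment and taking the top $B_{\max}$ is optimal for a linear (hence modular) objective under a cardinality constraint; a final sweep over the finitely many candidate values of $j^*$ completes the per-cell claim.

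The hard part will be justifying the separability and the independence-of-increments step: the maximum eigenvalue is a nonlinear, generally nonseparable functional of $\Am_{i,j}$, so the claim that feedback bits contribute additively (or at least that their ordering by $\tilde{Q}$ faithfully ranks the eigenvalue gains) needs the rank-one perturbation structure and the sign pattern of the coefficients (positive weight $Q_{i,j}$ on the served UT's own channel, negative weights $-\nu_{n,k}Q_{n,k}$ on the leakage terms). I expect the cleanest argument bounds the eigenvalue increments from above and below by the scaled queue-lengths and shows these bounds tighten enough to preserve the greedy ordering; handling the expectation $\mathbb{E}[\cdot]$ over the channel realization in \eqref{eqn:FBcostfn} is then routine, since the decision rule is applied realization-by-realization and optimality is preserved under the expectation.
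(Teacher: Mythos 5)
Your overall plan---reduce to a per-cell problem, and argue that selecting feedback indices in decreasing order of the scaled queue-lengths is optimal---points in the right direction, but it rests on two steps that do not hold up, and both are precisely where the paper's actual proof does its work. First, the claim that each feedback bit contributes an ``independent nonnegative increment'' to $\lambda^{\max}(\Am_{i,j})$, so that the objective is modular and greedy sorting is exactly optimal, is not established and is not true as stated: the maximum eigenvalue of a sum of signed rank-one perturbations is neither additive nor order-preserving under Weyl-type bounds, and the upper and lower bounds you propose do not ``tighten'' in general. The paper avoids ever needing additivity of increments. It instead runs a direct \emph{exchange argument} on whole selections: let $S^*_i$ be the set chosen by $AlgF$, swap one selected UT with one unselected UT, and show case by case (both in cell $i$; both in other cells; one of each) that the swapped objective of \eqref{eqn:FBcostfn} is no larger. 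The comparison is done on the full matrix inside $\lambda^{\max}$, using that $\bar{\Hm}^\prime_{i,n,k}=\Id$ turns the mean-channel terms into scalar shifts $-\nu_{n,k}\tilde{Q}_{n,k}$ of the eigenvalue, together with the normalization $\lambda^{\max}(\Hm^\prime)=1$ and the assumption $\nu_{i,j}>1$ (which forces the swapped-in UT's term to be nonpositive, hence allocated zero power, in the within-cell case). None of this machinery appears in your sketch.

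Second, your final remark that handling the expectation is ``routine, since the decision rule is applied realization-by-realization'' misreads the problem: in \eqref{eqn:FBcostfn} the feedback vector $\bv$ sits \emph{outside} the expectation, so it must be chosen before the channel is realized, and the realized eigenvalue increments you would sort by are unavailable at decision time. The paper's proof handles this with a coupling argument that your proposal is missing entirely: because the normalized channels $\Hm^\prime_{i,n,k}$ are i.i.d.\ across UTs, the two candidate selections (differing by one swapped UT) can be compared at a \emph{common} channel realization $\Hm$, where the pointwise inequality $F^\prime_{i,j}(\Hm)\leq F_{i,j}(\Hm)$ is proved by an add-and-subtract manipulation using $\nu_{n^\prime,k^\prime}\tilde{Q}_{n^\prime,k^\prime}<\nu_{n,k}\tilde{Q}_{n,k}$ and $\wv^H\Hm\wv\leq 1$; identical distributions then let the inequality pass through the expectation. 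Without the i.i.d.\ coupling, comparing $\mathbb{E}[\max_{\Pm}\cdots]$ under two different feedback patterns is not a pointwise statement at all. To repair your proposal you would need to (i) replace the modularity claim with an exchange argument on the full objective, and (ii) make the distributional symmetry across UTs an explicit ingredient rather than an afterthought.
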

The proof is given in appendix C, Part I. 

Let $P^\prime_{\text{inf}}$ be the minimum time average
transmit power $\bar{P}$, over all possible sequences of control actions of the optimization problem \eqref{eqn:OPT_basicEE}, with an additional constraint on the CSI feedback as in \eqref{eqn:FBscheme2}.
\begin{corollary}
The proposed joint feedback and beamforming algorithm can achieve the following performance : The average energy energy expenditure, and the time average virtual queue length satisfy the bounds given by
\begin{align}
 \sum_{i}  \bar{P}^{\text{FB}}_i 
 & \leq P^\prime_{\inf}+\frac{C_1}{V} \nonumber \\
 \sum_{i,j} \bar{Q}^{\text{FB}}_{i,j}  & \leq \frac{C_1+V NK P_{\text{peak}}}{\epsilon}
\end{align}
\label{corol:QLFedBd}
\end{corollary}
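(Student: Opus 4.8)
The plan is to reproduce, for the joint feedback-and-beamforming policy, the same drift-plus-penalty argument that underlies Proposition~\ref{prop:DBFperformance}, with Theorem~\ref{thm:QLFedBd} supplying the one extra ingredient: that the per-slot decision is an exact minimizer of the drift bound. First I would take the conditional Lyapunov drift bound of Proposition~\ref{prop:QLBoundm}, add the penalty $V\,\mathbb{E}[\sum_{i,j}\wv^H_{i,j}[t]\wv_{i,j}[t]\mid\Qm[t]]$ to both sides as in \eqref{eqn:papa501}, and observe that the control action now comprises both the feedback decision $\bv$ and the beamformers, the latter built from the partially known channel (the true $\hv_{i,n,k}$ where $b^{\text{FB}}_{i,n,k}=1$ and $\mathbb{E}[\hv_{i,n,k}]$ otherwise). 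Because the beamformers are chosen from information available only in expectation, the quantity actually minimized per slot is the conditional expectation of the drift-plus-penalty bound, whose maximization form is exactly the objective \eqref{eqn:FBcostfn}. By Theorem~\ref{thm:QLFedBd}, $AlgF$ solves \eqref{eqn:FBcostfn} optimally, so the joint algorithm attains the minimum of this expected drift-plus-penalty expression over \emph{all} feedback decisions obeying the constraint \eqref{eqn:FBscheme2} together with the induced beamformers from Algorithm~1.

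Next I would introduce the benchmark $\omega$-only policy. Since $P'_{\inf}$ is the optimal time-average power for problem \eqref{eqn:OPT_basicEE} under the additional feedback constraint \eqref{eqn:FBscheme2}, the standard theory of Lyapunov optimization guarantees, for each $\delta>0$, a stationary randomized control policy depending only on the channel realization $\Hm[t]$ (hence independent of $\Qm[t]$) and respecting \eqref{eqn:FBscheme2}, that satisfies $\mathbb{E}[A^{*}_{i,j}[t]-\mu^{*}_{i,j}[t]]\le 0$ for all $i,j$ while achieving $\mathbb{E}[\sum_i P^{*}_i[t]]\le P'_{\inf}+\delta$; a second policy in the same class provides strict slack $\mathbb{E}[A^{*}_{i,j}[t]-\mu^{*}_{i,j}[t]]\le-\epsilon$ at the crude cost $\mathbb{E}[\sum_i P^{*}_i[t]]\le NK P_{\text{peak}}$. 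Because both benchmarks lie in the feasible class over which the joint algorithm minimizes, I may upper bound the algorithm's drift by evaluating the bound at either benchmark, and queue-independence turns the conditional expectations into constants.

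For the power bound I would use the feasibility benchmark: the term $\sum_{i,j}Q_{i,j}[t]\,\mathbb{E}[A^{*}_{i,j}-\mu^{*}_{i,j}]$ is nonpositive and is discarded, leaving $\Delta(\Qm[t])+V\,\mathbb{E}[\sum_i P^{\text{FB}}_i[t]\mid\Qm[t]]\le C_1+V(P'_{\inf}+\delta)$. Taking expectations, summing over $t=0,\dots,T-1$, telescoping the drift with $\Psi\ge0$ and $\Psi(\Qm[0])$ finite, dividing by $VT$, and letting $T\to\infty$ and then $\delta\to0$ yields $\sum_i\bar{P}^{\text{FB}}_i\le P'_{\inf}+C_1/V$. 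For the queue bound I would instead use the $\epsilon$-slack benchmark so that the same term becomes $-\epsilon\sum_{i,j}Q_{i,j}[t]$; dropping the nonnegative penalty $V\,\mathbb{E}[\sum_i P^{\text{FB}}_i\mid\Qm[t]]$ gives $\Delta(\Qm[t])\le C_1+VNK P_{\text{peak}}-\epsilon\sum_{i,j}Q_{i,j}[t]$, and the identical telescoping (now dividing by $\epsilon T$) produces $\sum_{i,j}\bar{Q}^{\text{FB}}_{i,j}\le(C_1+VNK P_{\text{peak}})/\epsilon$. This matches the two claimed inequalities and reuses the algebra of Proposition~\ref{prop:DBFperformance} verbatim.

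The main obstacle is not the telescoping but the benchmark-comparison step under partial CSI. I must verify that the optimality of $AlgF$ in Theorem~\ref{thm:QLFedBd} really means the joint policy minimizes the \emph{expected} drift bound over precisely the policy class that contains the stationary randomized benchmarks, so that the inequality $\text{bound}^{\text{FB}}\le\text{bound}^{*}$ is legitimate; and I must confirm that the constant $C_1$ of \eqref{eqn:C1defm} is unchanged, since the true channels (not the mean-substituted ones) govern the actual arrival and departure processes in \eqref{eqn:Adef}--\eqref{eqn:mudef}, yet the uniform bounds \eqref{eqn:UpBoundQoS21}--\eqref{eqn:UpBoundQoS22} still hold because the transmitted beamformers from Algorithm~1 always respect the peak-power constraint regardless of which channels were used to form them. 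Once this consistency between the expected objective optimized and the realized queue dynamics is established, the corollary follows exactly as in Proposition~\ref{prop:DBFperformance}.
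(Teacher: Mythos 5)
Your proposal is correct and follows essentially the same route as the paper: Appendix~C, Part~II likewise starts from the drift bound of Proposition~\ref{prop:QLBoundm}, uses iterated expectations to express the per-slot objective in terms of the mean-substituted channel matrices (yielding exactly \eqref{eqn:FBcostfn}), invokes Theorem~\ref{thm:QLFedBd} for per-slot optimality over the feedback-constrained class, compares against a stationary randomized benchmark with $\epsilon$-slack achieving $P^\prime_{\text{inf}}$, and telescopes. The only cosmetic difference is that you split the comparison into two benchmarks (one for the power bound, one for the queue bound) whereas the paper uses a single benchmark satisfying both conditions; your version is, if anything, slightly more careful, and your closing consistency checks (that $C_1$ is governed by the true channels and the peak-power constraint, and that the benchmarks lie in the optimized class) are precisely the points the paper handles implicitly.
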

The proof  is provided in Appendix C, Part II. 
Recall that the time average QoS constraint is satisfied if the  time average queue lengths are bounded, which is ensured by the use our algorithm for finite values of $V$, $N$ and $K$.

\section{Numerical Results}
\label{sec:NumResults}
In this section, we present some numerical results to demonstrate the performance 
of the DBF algorithm.
We consider a system consisting of $2$ small cells with 
each cell having $2$ UTs each. Each SCBS has $5$ antennas and $P_{\text{peak}} = 10$dB per SCBS.
We consider a distance dependent path loss model, the path loss factor from SCBS$_i$ to UT$_{j,k}$ is given as
$\sigma_{i,j,k} = {d^{-\beta}_{i,j,k}}$
where $d_{i,j,k}$ is the distance between SCBS$_i$ to UT$_{j,k}$, normalized
to the maximum distance within a cell, and $\beta$ is the path
loss exponent (in the range from $2$ to $5$ dependent on the radio environment).
Once again, as stated before, we assume $N_0 = 1,$ and normalize
the signal powers by the noise variance.

We plot the time average energy expenditure per SCBS versus the target QoS for two cases.
In the first case, we solve the problem of minimizing the instantaneous energy expenditure subject to instantaneous QoS target
constraints ( $\min_{\wv} \sum^N_{i = 1} \sum^K_{j = 1} \wv^H_{i,j}[t] \wv_{i,j}[t] \ \text{s.t.} \ \gamma_{i,j}[t] \geq \lambda_{i,j} \ \forall t$). We repeat this for $1000$ time slots. In the second scenario, we solve the problem of minimizing the time average energy expenditure subject to time average QoS constraints ($\bar{\gamma}_{i,j} \geq \lambda_{i,j}$).
We plot the result in Figure \ref{fig:feas_SINREE}. It can be seen that for the case with time average constraints, the energy expenditure is lower.
In particular, for a target QoS of $10$dB, energy minimization with time average QoS constraints under the Lyapunov optimization based approach 
provides upto $4$dB reduction in the energy expenditure as compared to the case with instantaneous QoS constraints (for $V = 800.$)
This is in accordance with our intuition that the time average QoS constraint provides greater flexibility 
in allocating resources over channel fading states.
\begin{figure}[htp]
\begin{scriptsize}
 \includegraphics[width=3.5 in]{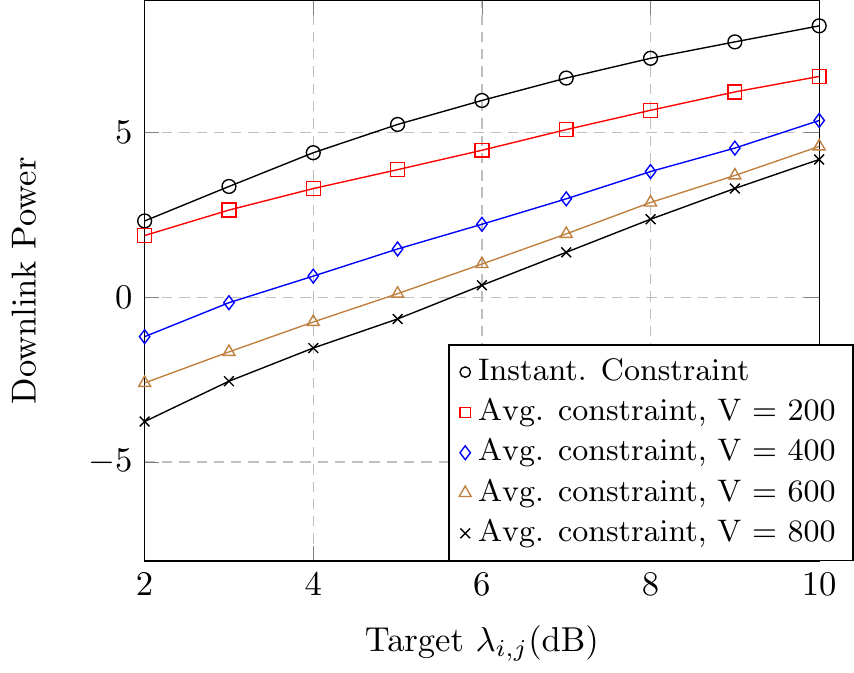}  
\caption{Average energy expenditure Vs target QoS for a two cell scenario, each cell consisting of two UTs, $N_t = 5,$ $P_{\text{peak}} = 10$dB.}
\label{fig:feas_SINREE}
\end{scriptsize}
 \end{figure} 

We also plot the achieved time average SINR as a function of the target QoS for different values of $V$ in Figure \ref{fig:feas_SINRVsTarget}.
It can be seen that in each of the cases, the achieved time average SINR in the downlink is above the target value $\lambda_{i,j}$. This result emphasizes the fact that although 
the QoS constraint of the form \eqref{eqn:QoS2constraint} was used in the algorithm design, the resulting algorithm still achieves the target value in terms of the achieved time average SINR. Thus, our algorithm can be directly used in the case with time average SINR constraints.
\begin{figure}[htp]
\begin{scriptsize}
 \includegraphics[width=3.5 in]{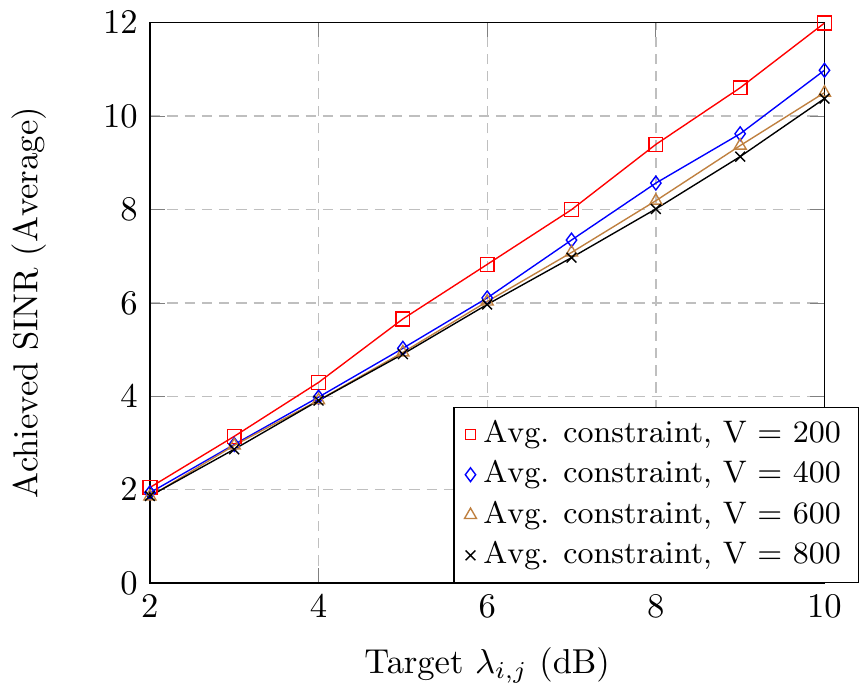}  
\caption{Achieved time average SINR Vs target QoS for a two cell scenario, each cell consisting of two UTs, $N_t = 5,$ $P_{\text{peak}} = 10$dB.}
\label{fig:feas_SINRVsTarget}
\end{scriptsize}
 \end{figure}

We next plot the time average energy expenditure per SCBS versus the average queue-length
for different values of $V$  obtained by running the DBF algorithm for $1000$ time slots in Figure \ref{fig:PowQLV}. 
The target time average QoS is $10$dB. It can be seen that 
as the value of $V$ increases, the time average energy expenditure
decreases, and the average queue-length increases. This is in accordance with the performance bounds of DBF algorithm.
Increasing $V$ implies that the SCBS transmits less frequently resulting in higher average queue-length  and lower average energy expenditure.
\begin{figure}[htp]
\begin{scriptsize}
 \includegraphics[width=3.5 in]{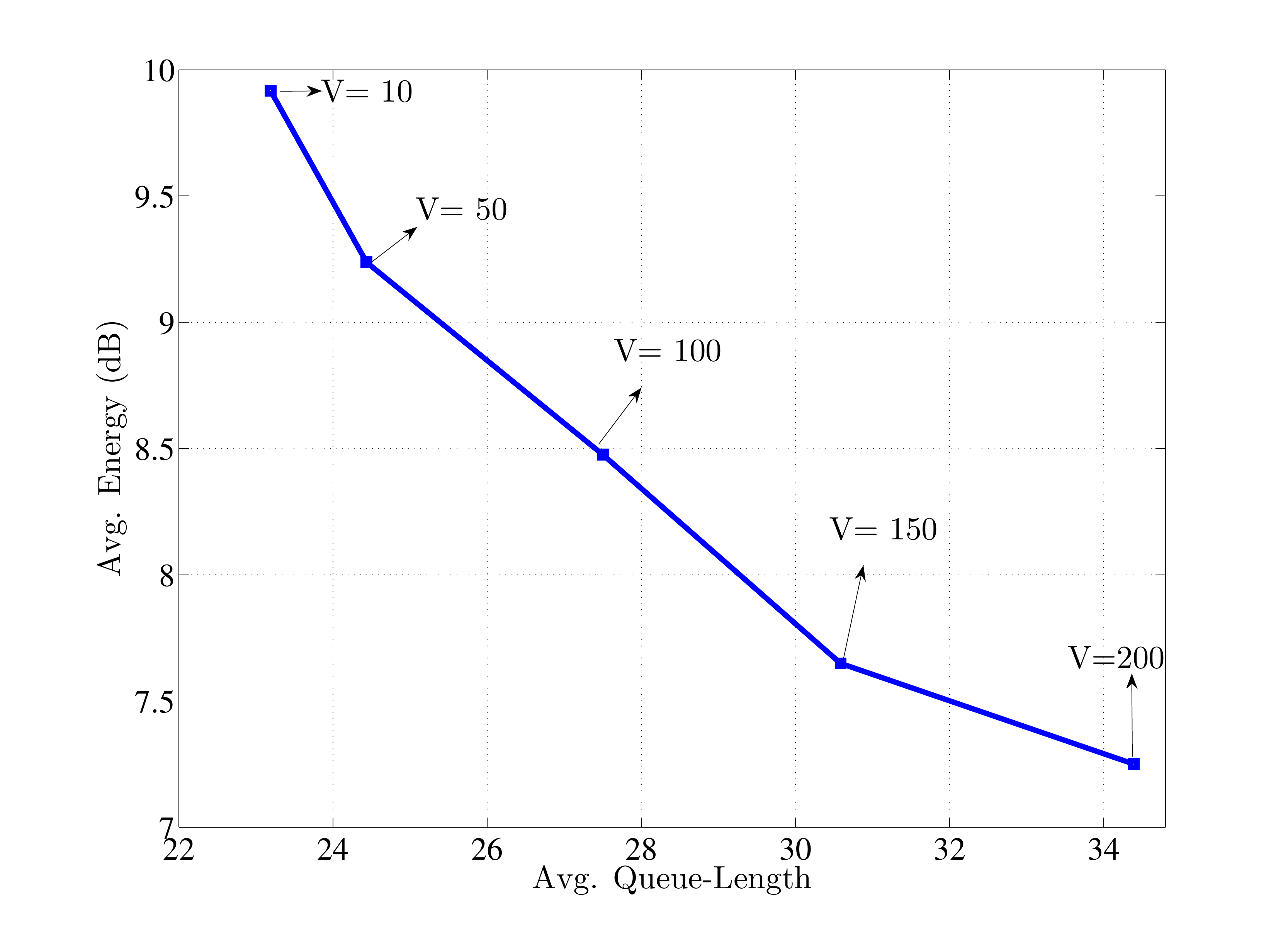}  
\caption{Time average queue-length vs time average energy expenditure for a two cell scenario, each cell consisting of two UTs, $N_t = 5,$ peak power per SCBS = $10$dB Target QoS value= $10$dB.}
\label{fig:PowQLV}
\end{scriptsize}
 \end{figure} 

Next, we examine the impact of the number of transmit antennas on the target QoS. We plot the average queue-length (of the virtual queue) as a function of the target QoS for different number of transmit antennas in Figure \ref{fig:QLVsTarget}. 
\begin{figure}[htp]
\begin{scriptsize}
 \includegraphics[width=3.5 in]{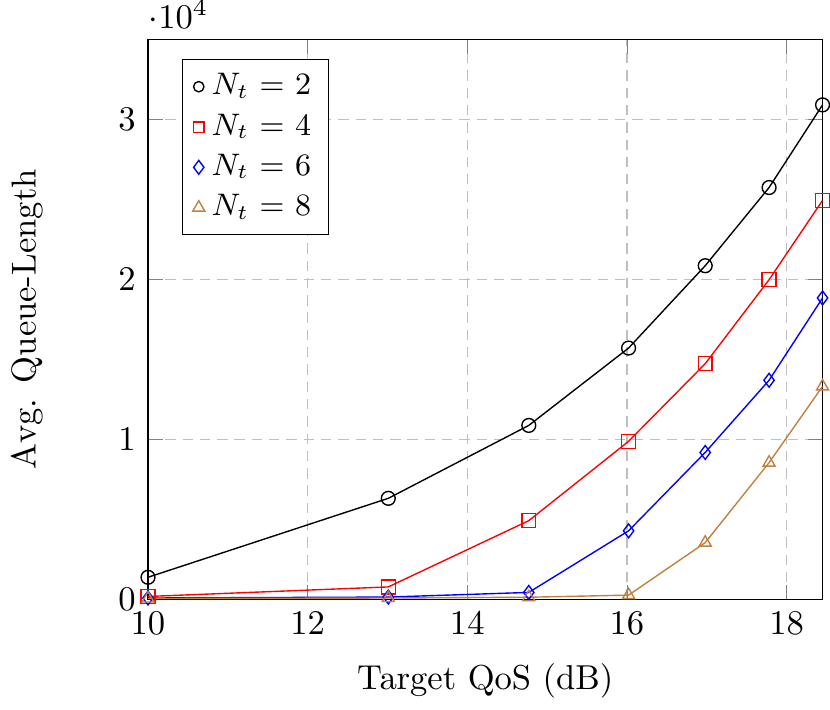}  
\caption{Time average queue-length vs target QoS in dB for different number of transmit antennas, peak power per SCBS = $10$dB, V = 100.}
\label{fig:QLVsTarget}
\end{scriptsize}
 \end{figure} 
First, it can be seen that as the number of transmit antennas increase, the average queue-length 
becomes lower. Also, it can be seen that there is a cut-off point beyond which the average queue-length blows up.
The cut off point represents the maximum supportable QoS target for the system.
The higher the number of transmit antennas, higher is the maximum supportable QoS.
This is due to the fact that higher number of antennas leads to greater degrees of freedom
resulting in enhancement of the useful signal power and less interference power.

 \section{Conclusion}
 \label{sec:Conclusion}
 In this work, we handled the problem of minimizing the transmit power expenditure subject to satisfying time
 average QoS constraints in a SCN scenario. Using the technique of Lyapunov optimization,  we proposed a decentralized online beamforming design algorithm whose performance in terms of the time average power expenditure can be made arbitrarily close to the optimal. Our results show that time average QoS constraints can lead to better savings in terms of transmit power as compared to solving the problem with instantaneous constraints. Additionally, we showed with the help of numerical results that the achieved time average SINR with our algorithm also satisfies the target value.
In addition we also considered two practical cases of interest in SCNs. In the first, we
considered the impact of delay in information exchange among the SCBSs. We showed that
the performance of the proposed algorithm with delays is only affected upto a finite constant in comparison with the case of no delays. Secondly, we considered the impact of CSI feedback.
We formulated a joint CSI feedback and beamforming framework using Lyapunov optimization technique. Furthermore, we provided a low complexity
algorithm to optimally solve the CSI feedback problem (that is obtained by the analysis of Lyapunov optimization).
We then provided performance bounds between our solution and the optimal solution of the original joint feedback and beamforming design stochastic problem.


\section*{Appendix A: Performance Bounds }
\label{app:PerfectCSIcase}
\subsection*{Part 1: Proof of Proposition \ref{prop:QLBoundm}}  
From \eqref{eqn:QLevolution_notations}, we can write the following.
 \begin{align}
Q^2_{i,j}[t+1] & \leq (Q_{i,j}[t]-\mu_{i,j}[t])^2+A^2_{i,j}[t] \nonumber \\ & +2A_{i,j}[t] \max \LB 0,Q_{i,j}[t]-\mu_{i,j}[t]\RB \nonumber \\
 & \leq Q^2_{i,j}[t] + \mu^2_{i,j}[t] + A^2_{i,j}[t] \nonumber \\ & - 2 Q_{i,j}[t] \LB \mu_{i,j}[t] - A_{i,j}[t] \RB.
\end{align}
Summing with respect to $i,j$ and taking the conditional expectation
$\mathbb{E}[.|\Qm[t]], $ we have,
\begin{align}
\Delta (\Qm[t]) & \leq \sum_{i,j} \mathbb{E} \LSB \mu^2_{i,j}[t] + A^2_{i,j}[t] | \Qm[t] \RSB \nonumber \\
&- \sum_{i,j} Q_{i,j} [t] \mathbb{E} \LSB \mu_{i,j}[t] - A_{i,j}[t] | \Qm[t] \RSB \label{eqn:here1}
\end{align}
Using the bound
$\mu_{i,j}[t] \leq \mu_{\max}$ 
and $A_{i,j}[t] \leq A_{\max},$
and defining $C_1 = \sum_{i,j}((A^{i,j}_{\max})^2+(\mu^{i,j}_{\max})^2).$ 
Taking the upper bound $A^{i,j}_{\max} \leq A_{\max}$ and $\mu^{i,j}_{\max} \leq \mu_{\max},$ 
we obtain $C_1 =  NK(A^2_{\max}+\mu^2_{\max}),$
we obtain the bound of \eqref{eqn:papa501m}.

\subsection{Part II : Proof of Proposition \ref{prop:DBFperformance}}
 From \eqref{eqn:papa51}, for the DBF policy we have,
 
 \begin{small}
 \begin{align}
& \Delta  (\Qm[t]) + V \mathbb{E} \big{[}  \sum_{i,j} (\wv^{\text{DBF}}_{i,j})^H [t]\wv^{\text{DBF}}_{i,j}[t] \big{|} \Qm[t] \big{]} \leq C_1 \nonumber \\ & \ +  \sum_{i,j}Q_{i,j}[t] (\lambda_{i,j}+\nu_{i,j} N_0)   \nonumber  \\ & - \sum_{i,j} \mathbb{E} \big{[} Q_{i,j}[t] \Big{(} (|\wv^{\text{DBF}}_{i,j} [t])^H\hv_{i,i,j} [t]|^2 \nonumber \\ & - \nu_{i,j} \sum_{\substack{(n,k) \\ \neq (i,j)}}|(\wv^{\text{DBF}}_{n,k} [t])^H \hv_{n,i,j} [t]|^2 \big{)} \nonumber \\ & \qquad -V  (\wv^{\text{DBF}}_{i,j} [t])^H\wv^{\text{DBF}}_{i,j}[t]\big{|} \Qm[t]\big{]} \label{eqn:papa300} \\
& \stackrel{(a)}{\leq} C_1 +  \sum_{i,j}Q_{i,j}[t] (\lambda_{i,j}+\nu_{i,j}    N_0) \nonumber \\ & \qquad - \sum_{i,j} \mathbb{E} \big{[} Q_{i,j}[t] \big{(} |\wv^{\text{TS}}_{i,j} [t])^H\hv_{i,i,j} [t]|^2 \nonumber \\ & \qquad \qquad -  \nu_{i,j} \sum_{\substack{(n,k) \\ \neq (i,j)}}|(\wv^{\text{TS}}_{n,k} [t])^H \hv_{n,i,j} [t]|^2 \big{)} \nonumber \\ & \qquad -V  (\wv^{\text{TS}}_{i,j} [t])^H \wv^{\text{TS}}_{i,j}[t]\big{|} \Qm[t]\big{]} \label{eqn:papa303}
\end{align}
 \end{small}
where the beamforming vector $\wv^{\text{TS}}_{i,j}$ is the one implemented with
any stationary randomized policy.
Inequality $(a)$ is true due to the following reason.
Recall that the DBF algorithm is implemented to maximize the RHS of the bound in \eqref{eqn:papa300}.
Therefore, replacing \eqref{eqn:papa300} with any other control policy should yield the inequality of $(a).$

In particular we replace it by a stationary randomized policy which satisfies the following
conditions.

\begin{small}
\begin{align}
& \mathbb{E}  \Big{[} |(\wv{^\text{TS}}_{i,j}[t])^H \hv_{i,i,j}[t]|^2 - \nonumber \\ &  \nu_{i,j} N_0 \sum_{\substack{(n,k) \\ \neq (i,j)}}  |(\wv{^\text{TS}}_{n,k}[t])^H \hv_{n,i,j}[t]|^2 -\nu_{i,j}N_0 \big{|} \Qm(t) \Big{]} \geq \lambda_{i,j} + \epsilon,    \label{eqn:papa301}\\
& \mathbb{E} \Big{[} \sum_{i,j} ({\wv}{^\text{TS}}_{i,j}[t])^H \wv{^\text{TS}}_{i,j}[t] \big{|} \Qm(t) \Big{]} = P_{\text{inf}}
\label{eqn:papa302}
\end{align}
\end{small}
The existence of such a policy is guaranteed by arguments
from Caratheodory's theorem and its proof is omitted
for brevity (the reader can refer to \cite{neelybook2006} for the details of the proof).
Using \eqref{eqn:papa301} and \eqref{eqn:papa302} in \eqref{eqn:papa303} yields,
\begin{align}
\Delta & (\Qm(t))   + V \mathbb{E} \big{[}  \sum_{i,j} (\wv^{\text{DBF}}_{i,j} [t])^H\wv^{\text{DBF}}_{i,j}[t] \big{|} \Qm[t] \big{]}   \nonumber \\ & \leq C_1+ \sum_{i,j} Q_{i,j} \lambda_{i,j} - \sum_{i,j} Q_{i,j} (\lambda_{i,j}+\epsilon) + V P_{\text{inf}}  \nonumber \\
&  = NK B_1 - \epsilon \sum_{i,j} Q_{i,j} + V P_{\text{inf}}  \label{eqn:papa880}
\end{align}
Using the result of \eqref{eqn:papa880}, and following some
straightforward steps (similar to Lemma 4.1, \cite{neelybook2006}, omitted here for brevity) it can be shown that,
\begin{align*}
& \lim_{T \to \infty} \frac{1}{T} \sum^{T-1}_{t = 0} \sum_{i,j} \Em \LSB Q^{\text{DBF}}_{i,j}[t]\RSB \leq \frac{C_1+V NK P_{\text{peak}}}{\epsilon} \\
& \lim_{T \to \infty} \frac{1}{T} \sum^{T-1}_{t = 0} \sum_{i} \Em \LSB P^{\text{DBF}}_i[t] \RSB
\leq P_{\inf}+\frac{C_1}{V}
\end{align*}

\section*{Appendix B : Performance with Delayed Queue-length exchange}
\label{app:delQueue}
\subsection{Part I : Proof of Lemma \ref{lem:QLDelBd}}
From the definitions of $\Wm^{\text{DBF}}_{i,j}[t]$ and $\Wm^\tau_{i,j}[t],$
we can conclude the following
\begin{align}
\sum_{j}\trace(\Am_{i,j}[t]\Wm^{\text{DBF}}_{i,j}[t]) & \geq \sum_{j} \trace(\Am_{i,j}[t] \Wm^{\text{del}}_{i,j}[t]) \label{eqn:papa9}\\
\sum_{j} \trace(\Am^\tau_{i,j}[t]\Wm^{\text{del}}_{i,j}[t]) & \geq \sum_{j}\trace(\Am^\tau_{i,j}[t] \Wm^{\text{DBF}}_{i,j}[t]) \label{eqn:papa10}
\end{align}
Recall the expression for  $\Am_{i,j} [t]$ given by
\begin{align*}
& \Am_{i,j} [t]  = Q_{i,j}[t] \Hm_{i,i,j}[t]-  \sum_{\substack{k \neq j}} \nu_{i,k}  Q_{i,k}[t] \Hm_{i,i,k}[t] 
\\& -  \sum_{\substack{n \neq i, k}} \nu_{n,k} Q_{n,k}[t] \Hm_{i,n,k}[t]  - V 
\end{align*}
Adding and subtracting $\sum_{\substack{n \neq i, k}} \nu_{n,k} Q_{n,k}[t-\tau] \Hm_{i,n,k}[t]$ on
the right hand side, we obtain
\begin{align}
\Am_{i,j} [t] & = Q_{i,j}[t] \Hm_{i,i,j}[t]- \sum_{\substack{k \neq j}} \nu_{i,k}  Q_{i,k}[t] \Hm_{i,i,k}[t]  \nonumber 
\\ & - \sum_{\substack{n \neq i, k}} \nu_{n,k} Q_{n,k}[t-\tau] \Hm_{i,n,k}[t] \nonumber 
\\& + \sum_{\substack{n \neq i, k}} \nu_{n,k} (Q_{n,k}[t-\tau]-Q_{n,k}[t]) \Hm_{i,n,k}[t]  - V  \nonumber 
\\ & = \Am^\tau_{i,j} [t] + \sum_{\substack{n \neq i, k}} \nu_{n,k} Q^d_{n,k}[t] \Hm_{i,n,k}[t] \label{eqn:papa11}
\end{align}  
where $Q^d_{n,k}[t] = Q_{n,k}[t-\tau]-Q_{n,k}[t].$
Using \eqref{eqn:papa11} in the inequality \eqref{eqn:papa9} yields,
\begin{align}
\sum_{j} & \trace(\Am^\tau_{i,j}[t] \Wm^{\text{DBF}}_{i,j}[t]) \nonumber \\
& +  \sum_{j} \sum_{\substack{n \neq i, k}} \nu_{n,k} Q^d_{n,k}[t] \trace( \Hm_{i,n,k}[t] \Wm^{\text{DBF}}_{i,j}[t]) \nonumber \\
& \geq \sum_{j} \trace(A^\tau_{i,j}[t]\Wm^{\text{del}}_{i,j}[t]) \nonumber \\
& +  \sum_{j} \sum_{\substack{n \neq i, k}} \nu_{n,k} Q^d_{n,k}[t] \trace( \Hm_{i,n,k}[t] \Wm^{\text{del}}_{i,j}[t]) \label{eqn:papa12}
\end{align}
Adding \eqref{eqn:papa10} to \eqref{eqn:papa12} yields,
\begin{align}
\sum_{j} \sum_{\substack{n \neq i, k}} \nu_{n,k} Q^d_{n,k}[t] \trace( \Hm_{i,n,k}[t] (\Wm^{\text{DBF}}_{i,j}[t]-\Wm^{\text{del}}_{i,j}[t]))  \geq 0. \label{eqn:papa13}
\end{align}
Finally consider the term $\sum_j \trace(\Am_{i,j}[t](\Wm^{\text{DBF}}_{i,j}[t]- \Wm^{\text{del}}_{i,j}[t])).$
Using \eqref{eqn:papa11}, we obtain
\begin{align}
 \sum_j & \trace(\Am_{i,j}[t](\Wm^{\text{DBF}}_{i,j}[t]- \Wm^{\text{del}}_{i,j}[t])) \nonumber\\ = & \sum_j \trace(\Am^\tau_{i,j}[t](\Wm^{\text{DBF}}_{i,j}[t]- \Wm^{\text{del}}_{i,j}[t])) \nonumber \\
 & +  \sum_j \sum_{\substack{n \neq i, k}} \nu_{n,k} Q^d_{n,k}[t] \trace( \Hm_{i,n,k}[t] (\Wm^{\text{DBF}}_{i,j}[t]-\Wm^{\text{del}}_{i,j}[t])) \nonumber \\
 &  \leq  \sum_j \sum_{\substack{n \neq i, k}} \nu_{n,k} Q^d_{n,k}[t] \trace( \Hm_{i,n,k}[t] (\Wm^{\text{DBF}}_{i,j}[t]-\Wm^{\text{del}}_{i,j}[t])) 
 \label{eqn:papa16}
\end{align}
where the last inequality follows due to the fact that (from \eqref{eqn:papa9}) 
\begin{align*}
\sum_j \trace(\Am^\tau_{i,j}[t](\Wm^{\text{DBF}}_{i,j}[t]- \Wm^{\text{del}}_{i,j}[t])) \leq 0.
\end{align*}
 Using the equation for queue-length evolution in
\eqref{eqn:QLevolution_notations} and the bounds in 
$A_{i,j}[t] \leq A_{\max}$ and $\mu_{i,j}[t] \leq \mu_{\max}$,
we can conclude the following.
\begin{align}
Q_{i,j} [t-\tau] + \tau A_{\max} & \geq Q_{i,j} [t] \qquad \forall i,j \nonumber \\
Q_{i,j} [t-\tau] - \tau \mu_{\max} & \geq Q_{i,j} [t] \qquad \forall i,j \label{eqn:papa15}
\end{align}
Combining the above equations yield,
\begin{align}
Q_{i,j} [t] - \tau A_{\max} \leq Q_{i,j} [t-\tau] \leq Q_{i,j} [t] + \tau \mu_{\max}, \  \forall i,j
\end{align}
And hence we can conclude that 
\begin{align}
 - \tau A_{\max} \leq Q^d_{i,j} [t] \leq \tau \mu_{\max} \qquad \forall i,j
\end{align}
Note that in \eqref{eqn:papa13}, we have already shown that the right hand side of \eqref{eqn:papa16} is positive.

In order to bound the right hand side of \eqref{eqn:papa16}, we proceed as follows.
First, we recall that there can only be one active UT per cell.
Therefore, only one of $\Wm^{\text{DBF}}_{i,j}$ is a non-zero matrix.
Similarly, the case for $\Wm^{\text{del}}_{i,j}.$
Therefore, 
\begin{align}
& \sum_j  \sum_{\substack{n \neq i, k}} \nu_{n,k} Q^d_{n,k}[t] \trace( \Hm_{i,n,k}[t] (\Wm^{\text{DBF}}_{i,j}[t]-\Wm^{\text{del}}_{i,j}[t])) \nonumber \\ 
& \leq 2 \max \{ \tau A_{\max},\tau \mu_{\max}\} P_{\text{peak}} \max_{n \neq i,k} |h_{i,n,k}[t]|^2  \defines B_1 
\end{align}
Therefore,
\begin{align*}
\sum_{j} \trace(\Am_{i,j}[t]\Wm^{\text{DBF}}_{i,j}[t]) \leq  \sum_{j} \trace(\Am_{i,j}[t]\Wm^{\text{del}}_{i,j}[t])+ B_1 \ \forall i,t.
\end{align*}
and hence,
\begin{align}
\sum_{i,j} \trace(\Am_{i,j}[t]\Wm^{\text{DBF}}_{i,j}[t]) \leq  \sum_{i,j} \trace(\Am_{i,j}[t]\Wm^{\text{del}}_{i,j}[t])+C_2,  \forall t.
\end{align}
where $C_2 = N B_1 < \infty.$

\subsection*{Part II : Proof of Theorem \ref{thm:DelQL}}
Rewriting \eqref{eqn:papa51}, we have
\begin{align}
& \Delta  (\Qm[t]) + V \mathbb{E} \big{[}  \sum_{i,j} \wv^H_{i,j} [t]\wv_{i,j}[t] \big{|} \Qm[t] \big{]} \leq  \nonumber \\
& = C_1 +  \sum_{i,j}Q_{i,j}[t] (\lambda_{i,j}+\nu_{i,j} N_0) -\sum_{i,j} \trace(\Am_{i,j}[t]\Wm_{i,j}[t]) 
\end{align}
where we have used the equivalence of the quadratic form and the trace
form.
With delayed queue-length information, the policy corresponding to $\Wm^{\text{del}}$ is implemented.
Therefore, we have

\begin{small}
\begin{align}
\Delta & (\Qm(t)) + V \mathbb{E} \big{[}  \sum_{i,j} (\wv^{\text{del}}_{i,j} [t])^H\wv^{\text{del}}_{i,j}[t] \big{|} \Qm[t] \big{]} \nonumber \\ 
& \leq C_1 +  \sum_{i,j}Q_{i,j}[t] (\lambda_{i,j} + \nu_{i,j} N_0) -\sum_{i,j} \trace(\Am_{i,j}[t]\Wm^{\text{del}}_{i,j}[t])  \nonumber \\
& \stackrel{(a)}{\leq} C_1 +  \sum_{i,j}Q_{i,j}[t] (\lambda_{i,j} + \nu_{i,j} N_0) \nonumber \\ & \qquad \qquad \qquad  - (\trace(\Am_{i,j}[t] \Wm^{\text{DBF}}_{i,j}[t]) - C_2) \nonumber \\
& = C_1 + C_2+ \sum_{i,j} Q_{i,j}[t] (\lambda_{i,j} +\nu_{i,j} N_0) - \trace(\Am_{i,j}[t] \Wm^{\text{DBF}}_{i,j}[t])
\end{align}
\end{small}
where (a) follows from Lemma \ref{lem:QLDelBd}.
Now following similar steps as Appendix A, part II, 
we can obtain the bounds of \eqref{eqn:QLBd_Del} and \eqref{eqn:PowBd_Del}.

\section*{Appendix C: CSI Feedback Scheme}

\subsection*{Part I : Proof of Theorem \ref{thm:QLFedBd}}
Let  $b^*_{i,n,k}$ be the solution obtained by our feedback algorithm. First note that $\sum_{n,k} b^*_{i,n,k} = B_{\max}$. Let $S^*_i$ be the set of UTs that feedback their CSI according to our algorithm to the SCBS$_i$ i.e. $b^*_{i,n,k}=1$ $\forall (n,k) \in S^*_i$. Let $\bar{S}^*_i$ the set of UTs such that $b^*_{i,n,k}=0$ $\forall (n,k) \in \bar{S}^*_i$. We prove the theorem by showing that if we replace any UT in $S^*_i$ by a UT in $\bar{S}^*_i,$ the resulting objective function of \eqref{eqn:FBcostfn} will have a smaller value. 

We examine the following 3 cases: i) both UTs are in the cell $i$, ii) both UTs belong to other cell ($n \neq i$) and iii) one UT belongs to cell $i$ and the second belongs to other cell ($n \neq i$). Lets use the notation $\Hm_{i,n,k}=\sigma_{i,n,k} \Hm^\prime_{i,n,k}$. One can see that $\Hm^\prime_{i,n,k}$ and $\Hm^\prime_{i,n,k^\prime}$ are independent of each other for $k\neq k^\prime,$ and $\bar{\Hm^\prime}_{i,n,k}=\mathbb{E} [ \Hm^\prime_{i,n,k}]=I$.

i) First, lets consider two UTs $j$ and $j^\prime$ in cell $i$ such that $b^*_{i,i,j}=1$ and $b^*_{i,i,j^\prime}=0$. Clearly $\tilde{Q}_{i,j}>\tilde{Q}_{i,j^\prime}$ (by the steps followed in our algorithm). Lets consider the objective function of \eqref{eqn:FBcostfn} i.e.

\begin{align}
 \mathbb{E} \big{[} \max_{\Pm} \sum_{i,j} P_{i,j}  \lambda^{\max}\Big{(} Q_{i,j} ( b_{i,i,j}\tilde{\Hm}_{i,i,j} + \bar{\Hm}_{i,i,j})  \nonumber \\ - \sum_{\substack{(n,k) \\ \neq (i,j)}} \nu_{n,k} Q_{n,k} {(}  b_{i,n,k} \tilde{\Hm}_{i,n,k} + \bar{\Hm}_{i,n,k}) - V \Id \big{)} \big{]}
  \end{align}
 and denote by matrix $C_{i}(\Hm_{ii_-})=\sum_{\substack{(n\neq i,k) }} \nu_{n,k} Q_{n,k} {(}  b_{i,n,k} \tilde{\Hm}_{i,n,k} + \bar{\Hm}_{i,n,k}) + V \Id$ where $\Hm_{ii_-}$ is the matrix that contains the channels between the base station $i$ and all UTs in other cells $n\neq i$. In order to simplify the proof description, lets consider that there exists a third UT $k$ in cell $i$ that lies in set $S^*_i$ (however our argument holds for any number of UTs in the system). One can notice that for every channel realization, the solution  corresponding to the optimal beamforming and power allocation in \eqref{eqn:FBcostfn} is such that only one UT is active, or none of the UTs are active. Therefore, $P_{i,j} $ is either 
$P_{peak}$ or 0. According to our feedback algorithm $b^*_{i,i,j}=1$ and the objective function in \eqref{eqn:FBcostfn} corresponding to UT $j$ is  
\begin{align*}
F_{i,j} &= P_{i,j}  \lambda^{\max} ( \tilde{Q}_{i,j} (\Hm^\prime_{i,i,j})  \nonumber \\ &- C_{i}(\Hm_{ii_-}) - \nu_{i,k} \tilde{Q}_{i,k} \Hm^\prime_{i,i,k} - \nu_{i,j^\prime} \tilde{Q}_{i,j^\prime} \bar{\Hm^\prime}_{i,i,j^\prime}) .
\end{align*}
$F_{i,j}\geq 0$ and $F_{i,j}\geq F_{i,k}$ for some channel and virtual queue states i.e. setting $b^*_{i,i,j}=1$ will increase the objective function of \eqref{eqn:FBcostfn}. 
If we interchange UTs $j$ and $j^\prime$ i.e. we set $b^*_{i,i,j}$ to 0 and $b^*_{i,i,j^\prime}$ to 1,  the objective function in \eqref{eqn:FBcostfn} corresponding to UT $j^\prime$ is 

\begin{align}
F_{i,j^\prime} &= P_{i,j^\prime}  \lambda^{\max} (\tilde{Q}_{i,j^\prime} (\Hm^\prime_{i,i,j^\prime})  \nonumber \\ &- C_{i}(\Hm_{ii_-}) - \nu_{i,k} \tilde{Q}_{i,k} \Hm^\prime_{i,i,k} - \nu_{i,j} \tilde{Q}_{i,j} \bar{\Hm^\prime}_{i,i,j} )  \nonumber \\ & \stackrel{(a)}{=} P_{i,j^\prime}  \lambda^{\max} ( \tilde{Q}_{i,j^\prime} (\Hm^\prime_{i,i,j^\prime}) - C_{i}(\Hm_{ii_-}) - \nu_{i,k} \tilde{Q}_{i,k} \Hm^\prime_{i,i,k} )  \nonumber \\ &- P_{i,j^\prime} \nu_{i,j} \tilde{Q}_{i,j} \nonumber \\   & \stackrel{(b)}{\leq} P_{i,j^\prime}  [ \lambda^{\max} {(} \tilde{Q}_{i,j^\prime} (\Hm^\prime_{i,i,j^\prime}) {)}-\nu_{i,j} \tilde{Q}_{i,j} ] \stackrel{(c)}{\leq} 0.
\end{align}
Therefore $P_{i,j^\prime}$ is set to 0 (in order  to maximize the objective function of \eqref{eqn:FBcostfn}) and $F_{i,j^\prime}=0$ for all channel states.  Notice that (a) follows from $\bar{\Hm^\prime}_{i,i,j}=I$ and (b) follows from  $\wv^H_{i,j^\prime}( C_{i}(\Hm_{ii_-}) + \nu_{i,k} \tilde{Q}_{i,k} \Hm^\prime_{i,i,k} )\wv_{i,j^\prime} \geq 0$ $\forall$ $\wv_{i,j^\prime}$. (c) follows from $\tilde{Q}_{i,j}>\tilde{Q}_{i,j^\prime}$, $ \lambda^{\max}(\Hm^\prime_{i,i,j^\prime})=1$ and $\nu_{i,j}>1$ (target SNR is higher than 0 dB). Consequently, setting $b^*_{i,i,j}$ to 0 and $b^*_{i,i,j^\prime}$ to 1 will reduce the objective function of \eqref{eqn:FBcostfn}. 

ii) Lets now consider two UTs not belong to a cell $i$, denoted by the index $(n,k)$ and $(n^\prime,k^\prime)$. UT $(n,k)$ $\in$ $S^*_i$  i.e. $b^*_{i,n,k}=1$ and $(n^\prime,k^\prime)$ $\in$ $\bar{S}^*_i$  i.e. $b^*_{i,n^\prime,k^\prime}=0$. Therefore $\nu_{n^\prime,k^\prime} \tilde{Q}_{n^\prime,k^\prime} < \nu_{n,k} \tilde{Q}_{n,k}$.  
The objective function of any UT $j$ belonging to cell $i$ is 
\begin{align}
& F_{i,j} = P_{i,j}  \lambda^{\max} ( \tilde{Q}_{i,j} (\Hm^\prime_{i,i,j})  \nonumber \\ &- C_{i,j}(\Hm_{ij_-}) - \nu_{n,k} \tilde{Q}_{n,k} \Hm^\prime_{i,n,k} - \nu_{n^\prime,k^\prime} \tilde{Q}_{n^\prime,k^\prime} \bar{\Hm^\prime}_{i,n^\prime,k^\prime}) \nonumber \\ & \stackrel{(a)}{=} P_{i,j} [ \lambda^{\max} {(} \tilde{Q}_{i,j} (\Hm^\prime_{i,i,j})  - C_{i,j}(\Hm_{ij_-}) - \nu_{n,k} \tilde{Q}_{n,k} \Hm^\prime_{i,n,k} ) \nonumber \\ & \qquad \qquad - \nu_{n^\prime,k^\prime} \tilde{Q}_{n^\prime,k^\prime} ]
\end{align}
where matrix $C_{i,j}(\Hm_{ij_-})=\sum_{\substack{(l,k) }} \nu_{l,k} Q_{l,k} {(}  b_{i,l,k} \tilde{\Hm}_{i,l,k} + \bar{\Hm}_{i,l,k}) + V \Id$ where the sum is over all UTs in all cells $(l,k)$ (including cell $i$) except UTs $(i,j)$, $(n,k)$ and $(n^\prime,k^\prime)$.  $\Hm_{ii_-}$ is then the matrix that contains the channels between the base station $i$ and all UTs in other cells except channels $\Hm_{i,i,j}$, $\Hm_{i,n,k}$ and $\Hm_{i,n^\prime,k^\prime}$. Notice that (a) follows from $\bar{\Hm^\prime}_{i,n^\prime,k^\prime}=I$. The objective function of \eqref{eqn:FBcostfn} can be written as follows
\begin{align*}
 \mathbb{E}_{\substack{\Hm_{ij_-},\Hm^\prime_{i,i,j}}}  \mathbb{E}_{\substack{ \Hm^\prime_{i,n,k}}} \big{[} \max_{\Pm} \sum_{i,j} F_{i,j} \quad  \big{|} \Hm_{ij_-}, \Hm^\prime_{i,i,j} \big{]} 
  \end{align*}
If we change the feedback decision by setting $b^*_{i,n,k}=0$ and $b^*_{i,n^\prime,k^\prime}=1$, the objective function of UT $j$ becomes, 
\begin{align*}
F^\prime_{i,j} &=P_{i,j} \big{[}  \lambda^{\max}( \tilde{Q}_{i,j} (\Hm^\prime_{i,i,j})  - C_{i,j}(\Hm_{ij_-}) \\ & \qquad -\nu_{n^\prime,k^\prime} \tilde{Q}_{n^\prime,k^\prime}  \Hm^\prime_{i,n^\prime,k^\prime} ) \nonumber  - \nu_{n,k} \tilde{Q}_{n,k} \big{]}
\end{align*}
The objective function of \eqref{eqn:FBcostfn} becomes,
\begin{align}
 \mathbb{E}_{\substack{\Hm_{ij_-},\Hm^\prime_{i,i,j}}}  \mathbb{E}_{\substack{ \Hm^\prime_{i,n^\prime,k^\prime}}} \big{[} \max_{\Pm} \sum_{i,j} F^\prime_{i,j} \quad  \big{|} \Hm_{ij_-}, \Hm^\prime_{i,i,j}\big{]}
  \end{align}
Let $F_{i,j}(\Hm) $ and $F^\prime_{i,j}(\Hm)$ be the objective function of UT $(i,j)$ when respectively $ \Hm^\prime_{i,n,k}=\Hm$  and  $ \Hm^\prime_{i,n^\prime,k^\prime}=\Hm$. By definition of $\lambda^{\max}$, $F^\prime_{i,j}(\Hm) $ is given as,
\begin{align}
F^\prime_{i,j} & (\Hm)  =P_{i,j} \Big{[} \max_{\substack{||\wv_{i,j}||^2=1}} \wv^H_{i,j} ( \tilde{Q}_{i,j} (\Hm^\prime_{i,i,j})  - C_{i,j}(\Hm_{ij_-}) \nonumber \\ & -\nu_{n^\prime,k^\prime} \tilde{Q}_{n^\prime,k^\prime}  \Hm ) \wv_{i,j}  - \nu_{n,k} \tilde{Q}_{n,k} \Big{]} \nonumber \\ &= P_{i,j} \Big{[}   {\wv^{\prime*H}}_{i,j} ( \tilde{Q}_{i,j} (\Hm^\prime_{i,i,j})  - C_{i,j}(\Hm_{ij_-})  ) \wv^{\prime*}_{i,j} \nonumber \\ & -\nu_{n^\prime,k^\prime} \tilde{Q}_{n^\prime,k^\prime} {\wv^{\prime*H}}_{i,j}( \Hm ) \wv^{\prime*}_{i,j}  - \nu_{n,k} \tilde{Q}_{n,k} \big{]} \nonumber \\ & \stackrel{(a)}{=}  P_{i,j} \Big{[}   {\wv^{\prime*}}^H_{i,j} ( \tilde{Q}_{i,j} (\Hm^\prime_{i,i,j})  - C_{i,j}(\Hm_{ij_-})  ) \wv^{\prime*}_{i,j} \nonumber \\ & -\nu_{n,k} \tilde{Q}_{n,k}  {\wv^{\prime*H}}_{i,j}( \Hm ) \wv^{\prime*}_{i,j}  - \nu_{n^\prime,k^\prime} \tilde{Q}_{n^\prime,k^\prime} \nonumber \\ &+  {(}1- {\wv^{\prime*H}}_{i,j}( \Hm ) \wv^{\prime*}_{i,j} {)} {(}\nu_{n^\prime,k^\prime} \tilde{Q}_{n^\prime,k^\prime}  -\nu_{n,k} \tilde{Q}_{n,k}{)}\Big{]} \nonumber \\ &\stackrel{(b)}{\leq} P_{i,j} \Big{[}   {\wv^{\prime*}}^H_{i,j} {(} \tilde{Q}_{i,j} (\Hm^\prime_{i,i,j})  - C_{i,j}(\Hm_{ij_-})  {)} \wv^{\prime*}_{i,j} \nonumber \\ & -\nu_{n,k} \tilde{Q}_{n,k}  {\wv^{\prime*H}}_{i,j}( \Hm ) \wv^{\prime*}_{i,j}  - \nu_{n^\prime,k^\prime} \tilde{Q}_{n^\prime,k^\prime} \Big{]} \nonumber \\ &\leq P_{i,j} \Big{[}  \max_{\substack{||\wv_{i,j}||^2=1}} \wv^H_{i,j} ( \tilde{Q}_{i,j} (\Hm^\prime_{i,i,j})  - C_{i,j}(\Hm_{ij_-}) \nonumber \\ & -\nu_{n,k} \tilde{Q}_{n,k}  \Hm {)} \wv_{i,j}  - \nu_{n^\prime,k^\prime} \tilde{Q}_{n^\prime,k^\prime}  \Big{]} \stackrel{(c)}{=} F_{i,j}(\Hm) 
\end{align}
where (a) follows by adding and subtracting $\nu_{n^\prime,k^\prime} \tilde{Q}_{n^\prime,k^\prime}$ and $\nu_{n,k} \tilde{Q}_{n,k}  {\wv^{\prime*H}}_{i,j}( \Hm ) \wv^{\prime*}_{i,j}$. (b) follows from $\nu_{n^\prime,k^\prime} \tilde{Q}_{n^\prime,k^\prime} < \nu_{n,k} \tilde{Q}_{n,k}$ (due to our feedback algorithm since $(n,k)$ $\in$ $S^*_i$  and $(n^\prime,k^\prime)$ $\in$ $\bar{S}^*_i$) and ${\wv^{\prime*H}}_{i,j}( \Hm ) \wv^{\prime*}_{i,j} \leq 1$ ( $\lambda^{\max}(\Hm^\prime_{i,n,k}=\Hm)=1$ since $\Hm_{i,n,k}=\sigma_{i,n,k} \Hm^\prime_{i,n,k}$ where $\sigma_{i,n,k}$ is the average channel gain). (c) follows from the definition of $\lambda^{\max}$.
Therefore, for a given channel state $\Hm$ and for any UT $(i,j)$ we obtain $F^\prime_{i,j}(\Hm)  \leq F_{i,j}(\Hm)$. Using the fact that  $ \Hm^\prime_{i,n^\prime,k^\prime}$ and $ \Hm^\prime_{i,n,k}$ are i.i.d, we get 
\begin{align}
\mathbb{E}_{\substack{ \Hm^\prime_{i,n^\prime,k^\prime}}} \big{[} \max_{\Pm} \sum_{i,j} F^\prime_{i,j} \quad  \big{|} \Hm_{ij_-}, \Hm^\prime_{i,i,j}\big{]} \nonumber \\ \leq  \mathbb{E}_{\substack{ \Hm^\prime_{i,n,k}}} \big{[} \max_{\Pm} \sum_{i,j} F_{i,j} \quad  \big{|} \Hm_{ij_-}, \Hm^\prime_{i,i,j}\big{]}
  \end{align}
which implies,
\begin{align}
 \mathbb{E}_{\substack{\Hm_{ij_-},\Hm^\prime_{i,i,j}}} \mathbb{E}_{\substack{ \Hm^\prime_{i,n^\prime,k^\prime}}} \big{[} \max_{\Pm} \sum_{i,j} F^\prime_{i,j} \quad  \big{|} \Hm_{ij_-}, \Hm^\prime_{i,i,j}\big{]} \nonumber \\ \leq   \mathbb{E}_{\substack{\Hm_{ij_-},\Hm^\prime_{i,i,j}}} \mathbb{E}_{\substack{ \Hm^\prime_{i,n,k}}} \big{[} \max_{\Pm} \sum_{i,j} F_{i,j} \quad  \big{|} \Hm_{ij_-}, \Hm^\prime_{i,i,j}\big{]}
  \end{align}
Consequently, changing our feedback allocation $b^*_{i,n,k}$ will reduce the objective function of  \eqref{eqn:FBcostfn}.  

iii) To complete the proof, lets consider two UTs $(i,j)$ and $(n,k)$ where $n\neq i$. We assume that according to our algorithm  $b^*_{i,i,j}=1$  and  $b^*_{i,n,k}=0$ (the other case of $b^*_{i,i,j}=0$  and  $b^*_{i,n,k}=1$ can also be deduced in the same way). If we change the allocation by setting $b^*_{i,n,k}=1$ and $b^*_{i,i,j}=0$, the resulting objective function of   \eqref{eqn:FBcostfn} will be smaller. This is due to the following: According to i) our algorithm selects the best UTs in cell $i$ and according to ii) our algorithm selects the best UTs in other cells. Furthermore, in the second step of our algorithm, we compare between the selected UT in cell $i$ and the selected one from other cells and select the UT that maximizes the objective function of \eqref{eqn:FBcostfn}.

\subsection*{Part II : Proof of Corollary \ref{corol:QLFedBd}}

Recall the expression for Lyapunov drift from Proposition \ref{prop:QLBoundm}.
Recall that the expectation on the right hand side
of \eqref{eqn:papa51} is taken across the random processes in the system (and
in particular the randomness of the channel states).
When the SCBS has only the estimate of the channel state,
by the law of iterated expectations, we have

\begin{small}
\begin{align}
& \Delta_V \leq C_1+\sum_{i,j} Q_{i,j} (\lambda_{i,j} + \nu_{i,j} N_0) +
   \sum_{i,j} \mathbb{E}_{\Hm}   \Big{[} Q_{i,j}    |\wv^H_{i,j}{\hv}_{i,i,j}|^2  \nonumber \\ &  \qquad -  Q_{i,j} \nu_{i,j} \sum_{\substack{(n,k) \\ \neq (i,j)}}   |\wv^H_{n,k}{\hv}_{n,i,j}|^2   - V    \wv^H_{i,j} \wv_{i,j}   \big{|} \hat{\Hm} \Big{]} \nonumber \\
& = C+\sum_{i,j} Q_{i,j} (\lambda_{i,j}+\nu_{i,j} N_0) - \mathbb{E} \Big{[} \sum_{i,j}\wv^H_{i,j} {\Bm_{i,j}  \wv_{i,j}} \Big{]}
\end{align}
\end{small}
where $C_1$ is given as in Proposition~\ref{prop:QLBoundm} and the matrix 
$\Bm_{i,j} = Q_{i,j} \hat{\Hm}_{i,i,j}- \sum_{\substack{(n,k) \\ \neq (i,j)}} \nu_{n,k} Q_{n,k} \hat{\Hm}_{i,n,k}-V \Id,$
based on the estimate of the CSI.
For the feedback policy considered in this work, 
recall that if the UT$_{n,k}$
feeds back its CSI to the SCBS$_i,$
then $\hat{\hv}_{i,n,k} = \hv_{i,n,k}$
and if a UT$_{n,k}$
does not feed back its CSI to the SCBS$_i,$
then $\hat{\hv}_{i,n,k} = \mathbb{E}[\hv_{i,n,k}].$
Therefore, the matrix $\Bm_{i,j}$ can be compactly 
written in terms of the feedback indicator $b_{i,n,k}$
as
$ \Bm_{i,j} =  Q_{i,j} ( b_{i,i,j} \tilde{\Hm}_{i,i,j}+\bar{\Hm}_{i,i,j}) -  \sum_{\substack{(n,k) \\ \neq (i,j)}} \nu_{n,k} Q_{n,k} ( b_{i,n,k}\tilde{\Hm}_{i,n,k}+\bar{\Hm}_{i,n,k} ) - V \Id .$

As before, following the approach of Lyapunov optimization, 
we minimize the bound on the Lyapunov drift.
Therefore, for a given CSI feedback strategy 
$\bv,$ we consider the beamforming vector to maximize 
the term $\max_{\wv} \sum_{i,j} \wv^H_{i,j} \Bm_{i,j}  \wv_{i,j} $ and then
choose the feedback strategy to maximize the $\mathbb{E}[\max_{\wv} \sum_{i,j} \wv^H_{i,j} \Bm_{i,j}  \wv_{i,j}]$ and examine the performance of our algorithm.
Therefore we have,
\begin{align}
& \Delta_V  \leq C_1+\sum_{i,j} Q_{i,j} (\lambda_{i,j}+ \nu_{i,j} N_0) \nonumber \\
& \qquad \qquad - \max_{\bv} \mathbb{E} \big{[}\max_{\wv}\sum_{i,j}\wv^H_{i,j} \Bm_{i,j} \wv_{i,j} \big{]} \nonumber \\ & =  C_1+\sum_{i,j} Q_{i,j} (\lambda_{i,j}+ \nu_{i,j} N_0) \nonumber \\ & \qquad \qquad- \max_{\bv} \mathbb{E} \big{[} \max_{\Pm}\sum_{i,j} P_{i,j}\lambda^{\max}(\Bm_{i,j} ) \big{]} \label{eqn:uaua1}
\end{align}
where \eqref{eqn:uaua1} follows from steps similar to that of Proposition 2.

Using Theorem \ref{thm:QLFedBd}, our feedback and allocation policy 
minimizes the right hand side of  \eqref{eqn:uaua1} and therefore minimizes the 
bound on the Lyapunov function.  Replacing this by some other alternate feedback, beamforming and power allocation
policy $b^s_{i,n,k}$, $\vv^s_{i,j}$ and $P^s_{i,j},$  we have,
\begin{align}
& \Delta_V  \leq C_1+\sum_{i,j} Q_{i,j} (\lambda_{i,j}+\nu_{i,j} N_0)
+ V \mathbb{E} [  \sum_{i,j} P^s_{i,j}]  \nonumber \\
& \mathbb{E}\Big{[} -\sum_{i,j}  P^s_{i,j} \tilde{Q}_{i,j} {\vv^s_{i,j}}^H   (b^s_{i,i,j}\tilde{\Hm}_{i,i,j} + \bar{\Hm}_{i,i,j})  {\vv^s_{i,j}} \nonumber \\ & + \sum_{i,j} P^s_{i,j} \sum_{\substack{(n,k) \\ \neq (i,j)}}  \nu_{n,k}  \tilde{Q}_{n,k}  {\vv^s_{i,j}}^H (b^s_{i,n,k}\tilde{\Hm}_{i,n,k} + \bar{\Hm}_{i,n,k}) \vv^s_{i,j} \Big{]}
\end{align}
In particular choosing a stationary randomized policy such
that 
\begin{align}
& \mathbb{E}\Big{[} -\sum_{i,j}  P^s_{i,j} \tilde{Q}_{i,j} {\vv^s_{i,j}}^H   (b^s_{i,i,j}\tilde{\Hm}_{i,i,j} + \bar{\Hm}_{i,i,j})  {\vv^s_{i,j}} \nonumber \\ & + \sum_{i,j} P^s_{i,j} \sum_{\substack{(n,k) \\ \neq (i,j)}}  \nu_{n,k}  \tilde{Q}_{n,k}  {\vv^s_{i,j}}^H (b^s_{i,n,k}\tilde{\Hm}_{i,n,k} + \bar{\Hm}_{i,n,k}) \vv^s_{i,j} \Big{]} \nonumber \\ & \qquad \qquad \qquad \geq \lambda_{i,j} + \epsilon \\
& \mathbb{E} [  \sum_{i,j} P^s_{i,j}]  = P^\prime_{\text{inf}}(\epsilon)
\end{align}
we have
\begin{align}
& \Delta_V  \leq C_1- \epsilon \sum_{i,j} Q_{i,j} 
+ V P^\prime_{\text{inf}} \label{eqn:papa0880}
\end{align}
From \eqref{eqn:papa0880} and following some
straightforward steps (similar to Lemma 4.1, \cite{neelybook2006}, omitted here for brevity), we can conclude the result of Corollary 6.

\bibliographystyle{IEEEtran}
\bibliography{IEEEabrv,bibliography}

\begin{IEEEbiography}
[{\includegraphics[width=1in,height=1.25in,clip,keepaspectratio]{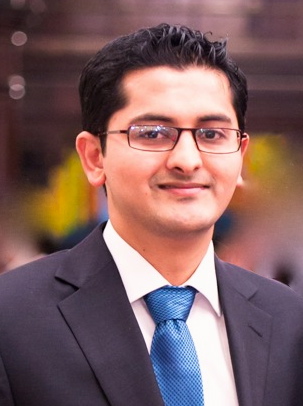}}]
{Subhash Lakshminarayana}
(S'07-M'12)
received the M.S. degree in electrical and computer engi- neering from The Ohio State University, Columbus, OH, USA, in 2009 and the Ph.D. degree from \'Ecole Sup\'erieure d'\'Electricit\'e (Sup\'elec), France in 2012.

Currently, he is with the Advanced Digital Sciences Center, Illinois at Singapore. In 2007, he was a Student Researcher with the Indian Institute of Science, Bangalore, India. From August 2013 to December 2013 and from May 2014 to November
2014, he held visiting research appointments with Princeton University. His research interests include the areas of smart grids and the security of cyber- physical systems, as well as wireless communication and signal processing, with emphasis on small-cell networks, cross-layer design wireless networks, MIMO systems, stochastic network optimization, and energy harvesting. He has served as a TPC member for top international conferences. His works have been selected among the Best conference papers on integration of renewable and intermittent resources at the IEEE PESGM - 2015 conference and the `Best 50 papers" of IEEE Globecom 2014 conference.
\end{IEEEbiography}

\begin{IEEEbiography}
[{\includegraphics[width=1in,height=1.25in,clip,keepaspectratio]{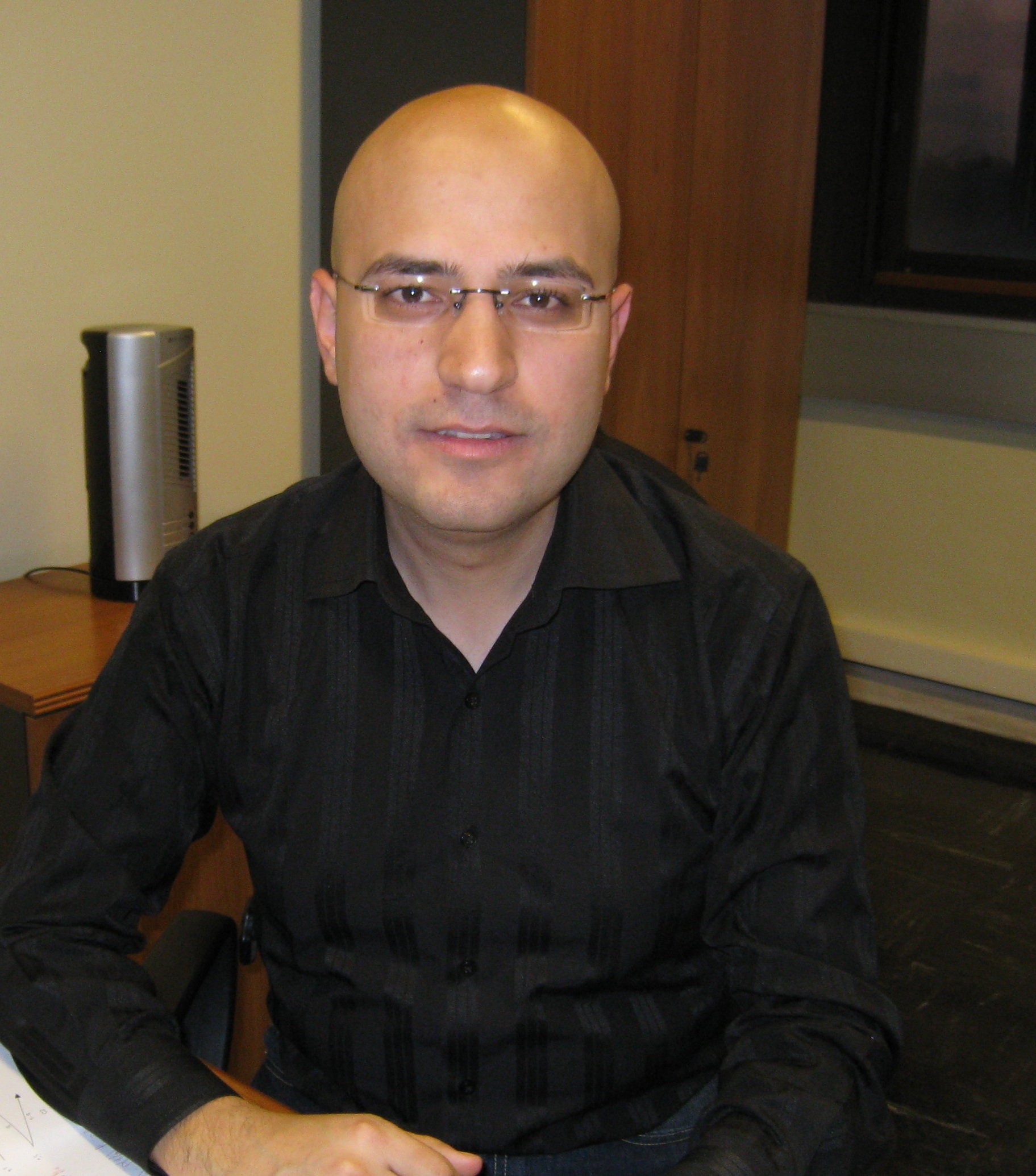}}]
{Mohamad Assaad}
received the BE degree in electrical engineering with high honors from Lebanese University, Beirut, in 2001, and the MSc and PhD degrees (with high honors), both in telecommunications, from Telecom ParisTech, Paris, France, in 2002 and 2006, respectively. Since March 2006, he has been with the Telecommunications Department at CentraleSup\'elec, where he is currently an associate professor. He has co-authored 1 book and more that 75 publications in journals and conference proceedings and serves regularly as TPC member for several top international conferences. His research interests include mathematical models of communication networks, resource optimization and cross-layer design in wireless networks, and stochastic network optimization.

\end{IEEEbiography}

\begin{IEEEbiography}
[{\includegraphics[width=1in,height=1.25in,clip,keepaspectratio]{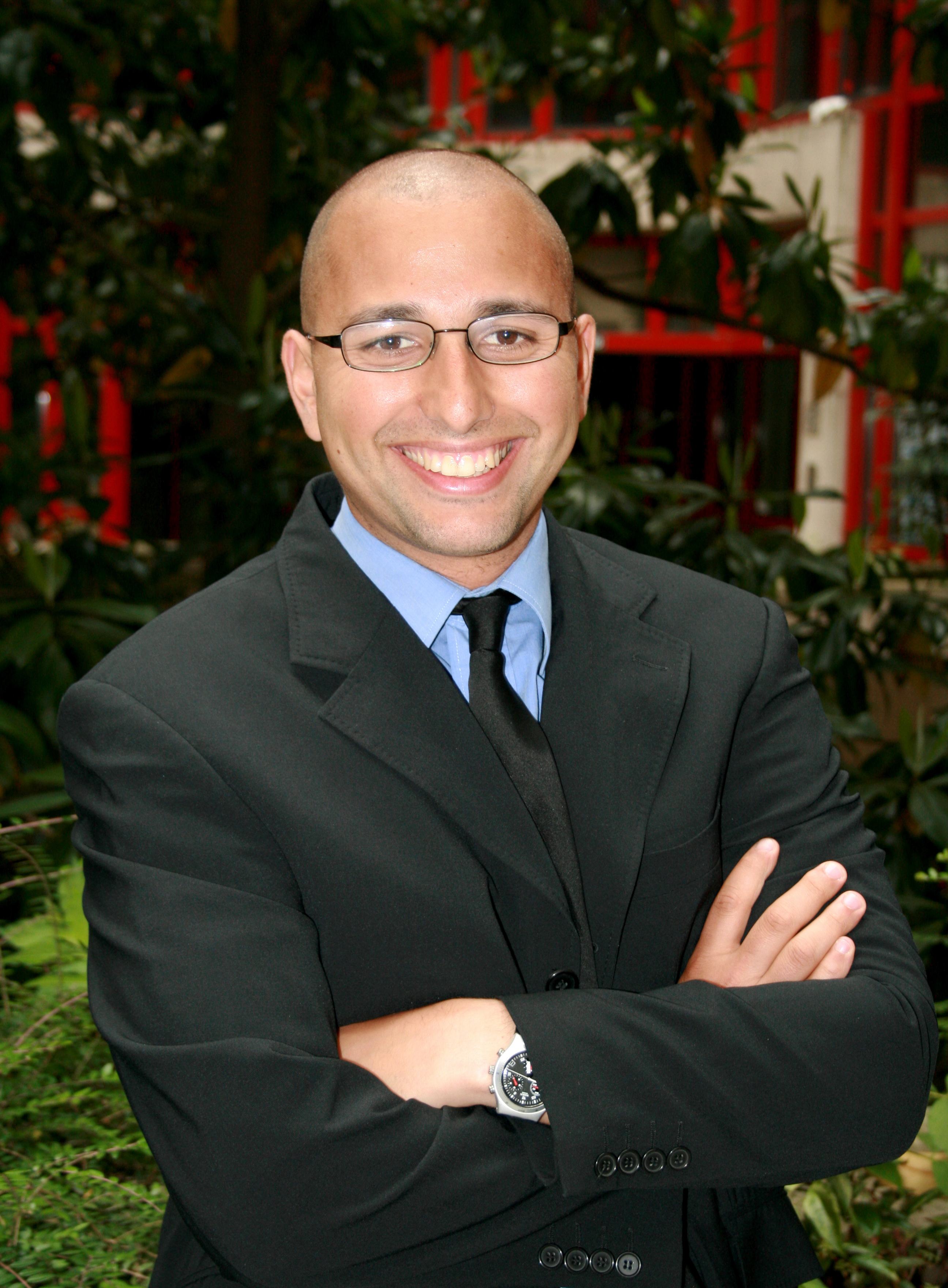}}]
{M\'erouane Debbah}
entered the \'Ecole Normale Sup\'erieure de Cachan (France) in 1996 where he received his M.Sc and Ph.D. degrees respectively. He worked for Motorola Labs (Saclay, France) from 1999-2002 and the Vienna Research Center for Telecommunications (Vienna, Austria) until 2003. From 2003 to 2007, he joined the Mobile Communications department of the Institut Eurecom (Sophia Antipolis, France) as an Assistant Professor. Since 2007, he is a Full Professor at Sup\'elec (Gif-sur-Yvette, France). From 2007 to 2014, he was director of the Alcatel-Lucent Chair on Flexible Radio. Since 2014, he is Vice-President of the Huawei France R\&D center and director of the Mathematical and Algorithmic Sciences Lab. His research interests are in information theory, signal processing and wireless communications. He is an Associate Editor in Chief of the journal Random Matrix: Theory and Applications and was an associate and senior area editor for IEEE Transactions on Signal Processing respectively in 2011-2013 and 2013-2014. M\'erouane Debbah is a recipient of the ERC grant MORE (Advanced Mathematical Tools for Complex Network Engineering). He is a IEEE Fellow, a WWRF Fellow and a member of the academic senate of Paris-Saclay. He is the recipient of the Mario Boella award in 2005, the 2007 IEEE GLOBECOM best paper award, the Wi-Opt 2009 best paper award, the 2010 Newcom++ best paper award, the WUN CogCom Best Paper 2012 and 2013 Award, the 2014 WCNC best paper award as well as the Valuetools 2007, Valuetools 2008, CrownCom2009 , Valuetools 2012 and SAM 2014 best student paper awards. In 2011, he received the IEEE Glavieux Prize Award and in 2012, the Qualcomm Innovation Prize Award.

\end{IEEEbiography}

\end{document}